
\documentclass{scrartcl}

\usepackage[english]{babel}

\usepackage{latexsym}
\usepackage{amsmath}
\usepackage{amsthm}
\usepackage{stmaryrd}
\usepackage[linesnumbered,boxed]{algorithm2e}
\usepackage{bookmark}
\usepackage{enumitem}

\SetAlCapSkip{1.5em}

\newtheorem{lemma}{Lemma}[section]
\newtheorem{theorem}[lemma]{Theorem}

\newtheorem{question}{Question}
\newtheorem{proposition}[lemma]{Proposition}

\newtheorem{definition}[lemma]{Definition}

\renewcommand{\phi}{\varphi}

\newcommand{\DR}{\mathrm{DR}}

\newcommand{\var}[1]{\mathrm{var}(#1)}
\newcommand\vek[1]{\mathbf{#1}}
\newcommand{\lit}[1]{\ensuremath{\operatorname{lit}(#1)}}

\newcommand{\litVar}[1]{\ensuremath{\llbracket #1\rrbracket}}
\newcommand{\cnflen}[1]{\ensuremath{\lVert #1\rVert}}
\newcommand\assign[1]{\mathbf{#1}}
\newcommand{\clsem}{\mathrm{cl}_\mathrm{sem}}
\newcommand{\clup}{\mathrm{cl}_\mathrm{up}}
\newcommand{\meta}{\mathrm{meta}}

\newcommand\mlabel[1]{\label{#1}}

\usepackage{array}
\newcounter{qhgrprowcntr}[table]
\renewcommand{\theqhgrprowcntr}{(q\arabic{qhgrprowcntr})}
\newcolumntype{Q}{>{\refstepcounter{qhgrprowcntr}{\theqhgrprowcntr}}c}

\usepackage{booktabs}

\usepackage{xcolor}
\definecolor{urlcolor}{HTML}{00009B}
\definecolor{citecolor}{HTML}{9B0000}

\hypersetup{
    colorlinks=true,       
    linkcolor=urlcolor,        
    citecolor=citecolor,         
    filecolor=magenta,     
    urlcolor=urlcolor,         
    linktocpage=true
}

\begin{document}
\title{Bounds on the size of PC and URC formulas}
\author{Petr Ku{\v c}era\thanks{Department of Theoretical Computer Science
and Mathematical Logic,
Faculty of Mathematics and Physics,
Charles University, Czech Republic,
kucerap@ktiml.mff.cuni.cz}
   \and Petr Savick{\'y}\thanks{Institute of
      Computer Science of
   the Czech Academy of Sciences,
   Czech Republic,
   savicky@cs.cas.cz}
}
\date{\empty}
\maketitle

\begin{abstract}
In this paper we investigate CNF formulas, for which the
unit propagation is strong enough to derive a contradiction
if the formula together with a partial assignment of the
variables is unsatisfiable
(unit refutation complete or URC formulas)
or additionally to derive all implied literals
if the formula is satisfiable
(propagation complete or PC formulas).
If a formula represents a function using existentially
quantified auxiliary variables, it is called an encoding
of the function.
We prove several results on the sizes of PC and URC formulas
and encodings. One of them are separations
between the sizes of formulas of different types. Namely, we prove
an exponential separation between the size of URC formulas and
PC formulas and
between the size of PC encodings using auxiliary variables
and URC formulas. Besides of this, we prove
that the sizes of any two irredundant PC formulas for the same
function differ at most by a factor polynomial in the number of
the variables and
present an example of a function demonstrating that
a similar statement is not true for URC formulas.
One of the separations above implies that a q-Horn formula
may require an exponential
number of additional clauses to become a URC formula. On the other
hand, for every q-Horn formula, we present a polynomial size URC encoding
of the same function using auxiliary variables.
This encoding is not q-Horn in general.
\end{abstract}


\section{Introduction}

Since unit propagation is a basic procedure used in DPLL based SAT
solvers including CDCL solvers, it has become a common practice to require that unit
propagation maintains at least some level of local consistency in the
constraints being encoded into a conjunctive normal form (CNF) formula.
Close connection between unit propagation in SAT solvers and maintaining generalized arc
consistency was investigated for example in~\cite{B07}. General CNF
formulas do not support even efficient consistency testing, hence, it
is important to understand the properties needed for good propagation.
In this paper, we investigate CNF formulas with a high level of propagation strength --- they
are unit refutation complete or propagation complete. The class of unit
refutation complete (URC) formulas was introduced in~\cite{V94} and it
consists of formulas whose consistency with any given partial assignment can be
tested by unit propagation. It was shown in~\cite{GK13} that the class
of URC formulas coincides with the class SLUR (single lookahead unit
resolution) introduced in~\cite{AFSS95}. The class of propagation complete formulas
(PC) was later introduced in~\cite{BM12}.
PC formulas are URC and, moreover, unit propagation derives all implied
literals provided the formula is satisfiable with a given partial assignment
of its variables.

Every boolean function has a URC and a PC representation.
In particular, a canonical CNF which consists of all prime implicates is
always PC and, hence, URC\@. On the other hand, it is co-NP complete to check if a
formula is URC~\cite{CKV12} or PC~\cite{BBCGK13}.

Unit propagation is a weak
deduction rule, so PC or URC representations
of functions tend to be large. The power of the unit
propagation increases when auxiliary variables
are used and in this case, we call the corresponding
representations PC or URC encodings.
PC encodings were already used as a target language of knowledge
compilation. In particular, in~\cite{AGMES16} a construction of a PC
encoding of a function represented with an ordered decision diagram
was described. URC encodings were considered in~\cite{BJM12} as an
existential closure of URC formulas.

As one of the results of this paper, we provide a characterization of
propagation complete formulas using specific Horn functions. This
characterization applies also to encodings provided that the set of the auxiliary
variables is fixed. The importance of the use of auxiliary variables
is emphasized by exponential lower bounds for PC and URC formulas without
auxiliary variables for functions that have a polynomial size PC encodings.
Our lower bounds are stronger than previously known similar lower bounds
in that they apply to functions for which a polynomial size CNF representation
exists. Moreover, the functions belong to well-known tractable classes
of Horn and q-Horn formulas~\cite{BCH90}.

We also show that the sizes of irredundant PC formulas
representing the same function are polynomially related
which as we show is not the case of irredundant URC formulas.
The minimum size of a URC representation is always at most the
minimum size of a PC representation, however, although
irredundancy is a helpful criterion for PC formulas, it can be
completely misleading for URC ones. This result suggests that
the structure of PC formulas can make it easier to find
a PC formula of a reasonable size provided such a formula
exists rather than a URC formula.

Section~\ref{sec:basic} summarizes known notions used in this paper.
A complete summary of the results is presented in Section~\ref{sec:results}.
The proofs of the results are postponed to the remaining sections.

In Section~\ref{sec:lb}, we prove exponential lower bounds on the sizes of PC and URC
formulas for specific functions with a polynomial size CNF representation which
additionally admit a polynomial size PC encoding using auxiliary variables.
In Section~\ref{sec:irred-PC-URC}, we prove that on the contrary to URC formulas,
the sizes of irredundant PC formulas representing the same function differ at most
by a factor polynomial in the number of the variables.
In Section~\ref{sec:characterize-PC}, we present a characterization
of PC formulas based on a correspondence between a specific variant of
the dual rail encoding~\cite{BKNW09,BBIMM18,BJM12,BBBCS87,IMM17,MFSO97,MIBMB19}
of a PC formula and
a Horn function capturing semantical closure of the sets of the literals assuming
the given formula.
In Section~\ref{sec:q-horn:urc-enc}, we present a construction
of a URC encoding for a general q-Horn formula of size polynomial
in the size of the input formula. In a general case, the presented
encoding is not a q-Horn formula.
The paper is closed by formulating several directions for further research
in Section~\ref{sec:further-research}.


\section{Basic Concepts}
\label{sec:basic} 

In this section, we recall the known concepts we use and introduce
the necessary notation.

\subsection{URC and PC Formulas}

A formula in conjunctive normal form (\emph{CNF formula}) is
a conjunction of clauses. A \emph{clause} is a disjunction of a set of
literals and a \emph{literal} is a variable \(x\) (\emph{positive literal}) or its negation
\(\neg x\) (\emph{negative literal}). Given a set of variables \(\vek{x}\), \(\lit{\vek{x}}\)
denotes the set of literals on variables in \(\vek{x}\).
We treat a clause as a set of literals and a CNF
formula as a set of clauses. In particular, \(|C|\) denotes the
number of literals in a clause \(C\) and \(|\varphi|\) denotes the
number of clauses in a CNF formula \(\varphi\). We denote
\(\cnflen{\varphi}=\sum_{C\in \varphi}|C|\) the \emph{length} of a CNF
formula \(\varphi\).

Clause \(C\) is \emph{Horn} if it contains at most one positive
literal, it is \emph{definite Horn}, if it contains exactly one
positive literal. A definite Horn clause \(\neg
   x_1\vee\cdots\vee\neg x_k\vee y\) represents the implication
\(x_1\land\cdots\land x_k\to y\) and we use both kinds of notation
interchangeably. The set of variables \(\{x_1, \ldots, x_k\}\) in the
assumption of a definite Horn clause is called its \emph{source set},
variable \(y\) is called its \emph{target}.

A \emph{partial assignment} \(\alpha\) of values to variables in \(\vek{z}\) is
a subset of \(\lit{\vek{z}}\) that does not contain a complementary
pair of literals, so we have \(|\alpha\cap\lit{x}|\le 1\) for each
\(x\in\vek{z}\). By \(\varphi(\alpha)\) we denote the formula
obtained from \(\varphi\) by the partial setting of the variables
defined by \(\alpha\). We identify a set of literals \(\alpha\) (in
particular a partial assignment) with the conjunction of these
literals if \(\alpha\) is used in a formula such as
\(\varphi(\vek{x})\land\alpha\).

We are interested in formulas which have good properties with respect
to unit propagation which is a well known procedure in SAT
solving~\cite{BHMW09}. For technical reasons, we represent
unit propagation using unit resolution.
The \emph{unit resolution} rule allows to
derive clause \(C\setminus\{l\}\) given a clause \(C\) and a unit
clause \(\neg l\). A clause \(C\) can be derived from
\(\varphi\) by \emph{unit resolution}, if \(C\) can be derived from
\(\varphi\) by a series of applications of the unit resolution rule
and we denote this fact with
\(\varphi\vdash_1C\). The notion of a propagation complete CNF formula
was introduced in~\cite{BM12} as a generalization of a unit refutation
complete CNF formula introduced in~\cite{V94}.

\begin{definition}\mlabel{def:urc-pc} 
Let $\phi(\vek{x})$ be a CNF formula.
\begin{itemize}
\item We say that $\phi$ is \emph{unit refutation complete} (\emph{URC})
if the following implication holds for every partial assignment
$\alpha \subseteq \lit{\vek{x}}$
\begin{equation}
   \label{eq:def-urc}
\phi(\vek{x}) \wedge \alpha \models \bot
\hspace{1em} \Longrightarrow \hspace{1em} \phi \wedge \alpha \vdash_1 \bot\,.
\end{equation}
\item We say that $\varphi$ is \emph{propagation complete} (\emph{PC})
if for every partial assignment $\alpha \subseteq \lit{\vek{x}}$ and
for every $l \in \lit{\vek{x}}$, such that
\begin{equation} \label{eq:def-pc-1}
\phi(\vek{x}) \wedge \alpha \models l
\end{equation}
we have
\begin{equation} \label{eq:def-pc-2}
\phi \wedge \alpha \vdash_1 l
\hspace{1em} \text{or} \hspace{1em}
\phi \wedge \alpha \vdash_1 \bot\,.
\end{equation}
\end{itemize}
\end{definition}

One can verify that a formula $\phi(\vek{x})$ is PC
if and only if for every partial assignment $\alpha\subseteq\lit{\vek{x}}$
and for every \(l\in\lit{\vek{x}}\) such that
$$
\varphi(\vek{x})\wedge\alpha\not\vdash_1 \bot
\hspace{1em}\text{and}\hspace{1em}
\varphi(\vek{x})\wedge\alpha\not\vdash_1 \neg l
$$
the formula $\phi(\vek{x})\land\alpha\wedge l$ is satisfiable.

Given a boolean function \(f(\vek{x})\) on variables \(\vek{x}=(x_1,
   \dots, x_n)\), we differentiate between CNF formulas which
represent \(f(\vek{x})\) with using only variables \(\vek{x}\) and CNF
encodings which can posibly use existentially quantified auxiliary
variables in addition to the input variables \(\vek{x}\). Formally, we
define the CNF encodings as follows.

\begin{definition}[Encoding] \mlabel{def:cnf-enc} 
Let \(f(\vek{x})\) be a boolean function on variables
\(\vek{x}=(x_1, \dots, x_n)\). Let \(\varphi(\vek{x},\vek{y})\)
be a CNF formula on \(n+m\) variables where
\(\vek{y}=(y_1, \dots, y_m)\).
We call \(\varphi\) a \emph{CNF encoding} of \(f\) if
for every \(\assign{a}\in{\{0, 1\}}^{\vek{x}}\) we have
\begin{equation} \label{eq:enc-def}
f(\assign{a}) = (\exists \assign{b}\in{\{0, 1\}}^{\vek{y}})\,
\varphi(\assign{a}, \assign{b})\,,
\end{equation}
where we identify $1$ and $0$ with logical values true and false.
The variables in \(\vek{x}\) and \(\vek{y}\) are called \emph{input variables}
and \emph{auxiliary variables}, respectively.
\end{definition}

The notions of PC and URC encodings are now defined as follows.

\begin{definition}
\emph{PC encoding} and \emph{URC encoding}
is an encoding that is a PC formula and a URC formula,
respectively.
\end{definition}

Finally, let us note that given a boolean function \(f(\vek{x})\), a
CNF \(\varphi(\vek{x})\) which consists of all prime implicates of
\(f(\vek{x})\) is always PC and also URC, thus every
boolean function has a PC formula which represents it.

\subsection{Implicational Dual Rail Encoding}
\label{sub:dual-rail:def}

We use the well-known dual rail encoding of partial
assignments~\cite{BBIMM18,BBBCS87,IMM17,MFSO97,MIBMB19} to simulate
unit propagation in a general CNF formula in the same way as in~\cite{BJM12,BKNW09}.
Let us describe a variant of such a simulation suitable for our purposes.

In short, if $\phi$ is a general CNF formula, then the implicational
dual rail encoding corresponding to $\phi$ is a Horn formula whose variables
correspond to the literals on $\var{\phi}$ and whose clauses
represent unit propagation on these literals. In addition, we
include the consistency clauses which are negative
clauses enforcing that at most one literal in each pair of complementary
literals on the variables $\var{\phi}$ is satisfied.

More precisely, assume, $\phi$ is a CNF formula on the variables $\vek{x}$
that does not contain an empty clause.
Let us introduce for
every $l \in \lit{\vek{x}}$ a meta-variable $\litVar{l}$ and let us
denote \(\meta(\vek{x})\) the set of all meta-variables.
For each pair $(C,l)$, such that $l \in C \in \phi$, we
express the step of unit propagation deriving $l$
from $\neg (C \setminus \{l\})$ as an implication
\begin{equation} \label{eq:implication}
   \left(\bigwedge_{e \in C \setminus \{l\}} \litVar{\neg e}\right)
   \to\litVar{l}
\end{equation}
or an equivalent Horn clause
\begin{equation} \label{eq:DR-clause}
   \litVar{l} \vee \bigvee_{e \in C \setminus \{l\}} \neg \litVar{\neg e} \,.
\end{equation}
The \emph{implicational dual rail encoding} of $\phi$
is the following formula on the variables \(\meta(\vek{x})\) consisting
of all the clauses~\eqref{eq:DR-clause} and the consistency clauses, namely
$$
\DR(\phi) = \bigwedge_{l \in C \in \phi}
\left( \litVar{l} \vee \bigvee_{e \in C \setminus \{l\}} \neg \litVar{\neg e} \right)
\wedge
\bigwedge_{x \in \vek{x}} (\neg \litVar{x} \vee \neg \litVar{\neg x}) \,.
$$

\subsection{q-Horn Formulas} 
\label{sub:q-horn:def} 

The class of q-Horn formulas was introduced in~\cite{BCH90} as a
generalization of both renamable Horn formulas and 2-CNF formulas.
It is a tractable class which
means that satisfiability of q-Horn formulas can be tested in polynomial
time, this class of formulas is closed under partial assignment, and we
can check if a formula is q-Horn in linear time by results of~\cite{BHS94}.
Although Horn formulas are URC, q-Horn formulas are not URC in general.
For example an unsatisfiable 2-CNF consisting only of binary clauses
is q-Horn, but it is not URC\@.

Following the
characterization of q-Horn formulas described in~\cite{BHS94}, let
us define these formulas as follows.

\begin{definition}[\cite{BCH90,BHS94}] \mlabel{def:q-horn}
   Let \(\phi\) be a CNF formula. We say that
   \(\gamma:\lit{\phi}\to\{0, \frac{1}{2}, 1\}\) is a valuation of
   literals in \(\phi\) if
   \(\gamma(u)+\gamma(\neg u)=1\) for every literal
   \(u\in\lit{\phi}\). Formula \(\phi\) is \emph{q-Horn} if
   there is a valuation \(\gamma\) of literals in \(\phi\) which
   satisfies for every clause \(C\in \phi\) that
   \begin{equation} \label{eq:q-horn}
      \sum_{u\in C}\gamma(u)\leq 1\,.
   \end{equation}
   The value \(\gamma(u)\) is called \emph{weight} of literal \(u\).
   If \(x\) is a variable, then \(\gamma(x)\) is called the
   \emph{weight} of variable \(x\).
\end{definition}

It is easy to observe that any renamable Horn formula is q-Horn, only
values \(0\) and \(1\) are sufficient in a valuation. Also any 2-CNF
is q-Horn, just assign \(\frac{1}{2}\) to every literal.


\section{Results}
\label{sec:results} 

In this section, we state the results proven in the paper.

\subsection{Separations} \label{subsec:separations}

Let us first recall the notion of succinctness introduced
in~\cite{GKPS95} and used later extensively in~\cite{DM02}.

\begin{definition}[Succinctness]
   Let \(\mathbf{L}_1\) and \(\mathbf{L}_1\) be two representation
   languages. We say that \(\mathbf{L}_1\) is \emph{at least as succinct as}
   \(\mathbf{L}_2\), iff there exists a polynomial \(p\) such that for
   every sentence \(\varphi\in\mathbf{L}_2\), there exists an
   equivalent sentence \(\psi\in\mathbf{L}_1\) where \(|\psi|\leq
      p(|\varphi|)\).
   We  say that \(\mathbf{L}_1\) is \emph{strictly more succinct} than
   \(\mathbf{L}_2\) if \(\mathbf{L}_1\) is at least as succinct as
   \(\mathbf{L}_2\) but \(\mathbf{L}_2\) is not at least as succinct
   as \(\mathbf{L}_1\).
\end{definition}

In our results, we consider four representation languages --- URC
formulas and encodings and PC formulas and encodings. It follows
directly from the definitions that URC encodings are at least as
succinct as URC formulas, PC encodings are at least as succinct as PC
formulas, URC formulas are at least as succinct as PC formulas, and
URC encodings are at least as succinct as PC encodings.

It is a well-known fact that there are functions with a polynomial size PC encoding that
require an exponential size CNF formula without auxiliary variables even if
no level of the propagation strength is required. A folklore example is the parity
of $n$ variables $x_1 \oplus \ldots \oplus x_n$ which requires CNF formula of
size $2^{n-1}$ and has an encoding of linear size using auxiliary variables $y_i$
defined by the recurrence $y_0 = 0$
and $y_i = y_{i-1} \oplus x_i$. Moreover, the straightforward way to encode this
recurrence into a CNF formula together with the unit clause $y_n$ leads to
a PC encoding, see~\cite{BM12}.
As a consequence, URC encodings are strictly more succinct than URC formulas
and PC encodings are strictly more succinct than PC formulas.
We prove the following result on succinctness of URC and PC formulas without
auxiliary variables.

\begin{theorem} \mlabel{thm:sep:urc:pc}
The language of URC formulas is strictly more succinct than the language of PC
formulas.
\end{theorem}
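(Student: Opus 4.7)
The easy half is immediate: every PC formula satisfies the URC property by Definition~\ref{def:urc-pc}, so the identity translation witnesses that URC formulas are at least as succinct as PC formulas. The task therefore reduces to exhibiting a family of boolean functions $f_n$ on polynomially many variables such that each $f_n$ admits a URC CNF representation of size polynomial in $n$, while every PC CNF representation of $f_n$ has size $2^{\Omega(n)}$.

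The plan is to design $f_n$ so that unit propagation can detect inconsistency from every partial assignment using a small, carefully crafted CNF (yielding the polynomial URC upper bound), while the full entailment structure of $f_n$ is rich enough to prevent any compact PC representation. The main tool for the PC lower bound is the following consequence of Definition~\ref{def:urc-pc}: if $\phi$ is a PC formula for $f_n$ and $C = l_1 \vee \cdots \vee l_k$ is a prime implicate of $f_n$ with $k \ge 2$, then for the partial assignment $\alpha = \{\neg l_2, \ldots, \neg l_k\}$ the formula $\phi \wedge \alpha$ is satisfiable and forces $l_1$, so $\phi \wedge \alpha \vdash_1 l_1$. Hence for every prime implicate of $f_n$ the PC formula must contain clauses that realise the corresponding unit-propagation derivation. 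I would then try to choose $f_n$ so that it has $2^{\Omega(n)}$ prime implicates whose unit-propagation derivations are pairwise \emph{incompatible}, in the sense that no single clause of any PC formula can participate in the derivations of more than polynomially many of them; a counting argument would then yield $|\phi| = 2^{\Omega(n)}$.

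The crux of the argument, and the main obstacle, is the construction of such an $f_n$ together with the combinatorial lemma establishing this incompatibility. The function must balance two opposing requirements: enough richness in its prime implicate structure to block compressed PC representations, yet enough regularity to admit a compact URC formula in which every partial assignment either triggers a short unit-propagation chain to a conflict or is genuinely consistent. A natural source of candidates is a function built from a combinatorial structure (for instance, a graph or hypergraph with a ``global'' satisfiability witness but many local obstructions) whose compact CNF encoding already performs all conflict detection by unit propagation but cannot be augmented to propagate every single implied literal without blow-up. Verifying URC for such an $f_n$ typically reduces to an explicit case analysis on the partial assignments, while the PC lower bound reduces to the prime-implicate counting argument outlined above.
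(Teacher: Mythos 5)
Your reduction of the theorem to exhibiting a family \(f_n\) with a polynomial-size URC formula and only exponential-size PC formulas is correct, and your observation that a PC formula must realise a unit-propagation derivation for every prime implicate is sound (indeed, for a prime implicate \(l_1\vee\cdots\vee l_k\), the assignment \(\{\neg l_2,\ldots,\neg l_k\}\) is consistent with the function, so propagation completeness forces \(\vdash_1 l_1\)). However, the proposal stops exactly where the actual content of the theorem begins: you explicitly defer ``the construction of such an \(f_n\) together with the combinatorial lemma establishing this incompatibility,'' and without that construction and its accompanying lower-bound lemma there is no proof. The separation is precisely the existence statement, so a plan that postulates a function with \(2^{\Omega(n)}\) prime implicates whose derivations are ``pairwise incompatible'' --- a notion you do not define, let alone verify for any concrete function --- is a statement of intent, not an argument.

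For comparison, the paper discharges this via Theorem~\ref{thm:lb-horn}: the separating family is an explicit Horn formula \(\psi_m\) (Horn, hence URC for free, which also disposes of your need to verify the URC upper bound by case analysis). It is built, via Lemma~\ref{lem:lb-cycle}, by taking a small formula \(\phi\) with exponentially many prime implicates (the clauses \(\neg y_i\vee z_i\) together with \(\neg z_1\vee\cdots\vee\neg z_{m-1}\)) and prefixing its clauses with literals \(\neg x_i\) on a cycle of implications \(x_i\to x_{i+1}\). The lower bound is not a counting argument over shared clauses but a direct one: if a prime PC representation omitted every clause \(\neg x_i\vee D\) for some prime implicate \(D\) of \(\phi\), then the partial assignment \(\neg D\) would be closed under unit propagation (every remaining clause \(\neg x_i\vee C\) with \(C\neq D\) prime keeps at least one literal of \(C\) unfalsified, plus the unfalsified \(\neg x_i\)), contradicting the fact that \(\neg x_1\) is implied. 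Hence the PC formula must contain at least one clause per prime implicate of \(\phi\), giving size \(2^{m-1}+O(m)\) (Proposition~\ref{prop:lb-PC}). Your proposal would need to supply an analogue of this construction and argument to be complete.
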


Every PC formula is a URC formula. Consequently, Theorem~\ref{thm:sep:urc:pc}
follows from~\ref{enum:lb-horn:a} in the next proposition proven in Section~\ref{sec:lb-PC}.

\begin{theorem} \mlabel{thm:lb-horn}
For every $n \ge 1$, there is a Horn and, hence, a URC formula $\phi_n$
of size $O(n)$, such that
\begin{enumerate}[label={(\alph*)}]
\item\label{enum:lb-horn:a} every PC formula equivalent to $\phi_n$ has size $2^{\Omega(n)}$,
\item\label{enum:lb-horn:b} there is a PC encoding of $\phi_n$ of size $O(n)$.
\end{enumerate}
\end{theorem}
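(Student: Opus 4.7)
The plan is to exhibit an explicit family of Horn formulas $\phi_n$ of size $O(n)$ and verify the two claims separately. I would aim for a $\phi_n$ whose function has a richly branching structure of implied literals: a small number of Horn clauses encode a combinatorially large family of case-by-case implications, and a single additional constraint (such as a negative unit or a long positive clause) forces these cases to be enumerated during unit propagation. The challenge in picking $\phi_n$ is that any intermediate variables one might use to ``linearize'' the reasoning must \emph{not} lie in the domain of $\phi_n$ itself---otherwise the Horn formula would already be PC and the lower bound would collapse. So one has to cook up a function that is succinctly Horn, yet whose unit-propagation closure is only linearizable by introducing \emph{fresh} auxiliary variables outside the domain of $\phi_n$.

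For part~(b), the PC encoding of size $O(n)$ would be constructed by adding $O(n)$ auxiliary variables that chain the reasoning, in the spirit of the parity encoding recalled in Section~\ref{subsec:separations}. The auxiliary variables are related by local Horn implications between consecutive positions, so that unit propagation can traverse the chain in both directions and derive every implied literal in $O(n)$ steps; a direct verification against Definition~\ref{def:urc-pc} then establishes PC-ness.

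For part~(a), the strategy is to (i)~exhibit a family $\mathcal{F}$ of $2^{\Omega(n)}$ partial assignments $\alpha$ on the variables of $\phi_n$, each paired with a literal $l_\alpha$ that is semantically implied by $\phi_n \wedge \alpha$; (ii)~show that for any PC formula $\psi$ equivalent to $\phi_n$, a unit-propagation derivation of $l_\alpha$ from $\psi \wedge \alpha$ must use at least one clause whose set of literals uniquely ``identifies'' $\alpha$ in a combinatorial sense; (iii)~conclude by a counting/packing argument that $|\psi| \ge |\mathcal{F}| = 2^{\Omega(n)}$. A natural candidate for $l_\alpha$ is a literal forced by $\alpha$ via one of the exponentially many long prime implicates of the function, chosen so that the identifying clause is forced to mention a literal from each coordinate block of $\alpha$.

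The main obstacle is the distinctness part of step~(ii). Since PC formulas need not consist of prime implicates, a single clause might in principle witness derivations for many different partial assignments, and ruling this out requires a combinatorial invariant showing each ``useful'' clause has a fingerprint tied to one specific $\alpha$. A clean route---suggested by the later sections of the paper---is to translate the PC condition into a Horn-size lower bound via the implicational dual rail encoding of Section~\ref{sub:dual-rail:def} together with the PC characterization promised in Section~\ref{sec:characterize-PC}: exponential Horn lower bounds for suitably structured functions are typically easier to establish than direct PC lower bounds, so even if the characterization is formally developed later, one can use it morally to guide the construction of $\phi_n$ and the family $\mathcal{F}$.
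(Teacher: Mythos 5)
Your outline has the right general shape (a family of $2^{\Omega(n)}$ partial assignments, each forcing a literal, plus a counting argument), but it stops exactly where the proof has to be done: you never exhibit a concrete $\phi_n$, and the step you yourself flag as ``the main obstacle'' --- showing that each partial assignment forces its \emph{own} clause in any equivalent PC formula --- is the entire content of the lower bound. Appealing ``morally'' to the dual rail characterization of Section~\ref{sec:characterize-PC} does not close this; the paper does not use it here, and no exponential Horn lower bound is derived from it in your sketch either. The paper's actual argument is elementary and explicit (Lemma~\ref{lem:lb-cycle} and Proposition~\ref{prop:lb-PC}): take the base Horn formula $\phi$ with clauses $\neg y_i \vee z_i$ for $i=1,\dots,m-1$ and $\neg z_1 \vee \dots \vee \neg z_{m-1}$, which has $2^{m-1}+m-1$ prime implicates, add a cycle of fresh variables via $\neg x_i \vee x_{i+1}$, $\neg x_m \vee x_1$, and attach each clause $C_i$ of $\phi$ as $\neg x_i \vee C_i$. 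One may assume the equivalent PC formula $\psi''$ is prime (replacing a clause by a prime implicate subsuming it preserves the function and can only strengthen unit propagation), which disposes of your worry that PC formulas need not consist of prime implicates; the prime implicates of the resulting $\psi_m$ are completely classified as $\neg x_i \vee x_j$ and $\neg x_i \vee C$ with $C$ a prime implicate of $\phi$. Now for each long prime implicate $D$ of $\phi$, the assignment $\alpha=\neg D$ semantically forces $\neg x_1$, yet every clause $\neg x_i \vee C$ with $C \neq D$ retains the literal $\neg x_i$ \emph{and} at least one literal of $C$ not falsified by $\alpha$ (distinct prime implicates cannot contain one another), so unit propagation is stuck unless $\psi''$ contains some $\neg x_i \vee D$. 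Distinct $D$ yield distinct clauses, giving $|\psi''|\ge 2^{m-1}$. That is precisely the ``fingerprint'' invariant your step (ii) lacks, and it comes from the construction, not from a generic counting principle.

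For part (b) your plan is also only a sketch: ``auxiliary variables that chain the reasoning'' must still be proved propagation complete against Definition~\ref{def:urc-pc}, which is not automatic for Horn-style chains. The paper instead observes (Lemma~\ref{lem:enc-horn}) that under the variable order $x_1,y_1,z_1,\dots,x_m$ the function of $\psi_m$ has a constant-width OBDD and invokes the CompletePath construction of~\cite{AGMES16}, which yields a PC encoding of linear size without any new correctness argument. So the gap in (b) is real but secondary; the decisive missing piece is the concrete construction and the distinctness argument for the lower bound in (a).
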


Theorem~\ref{thm:lb-horn} compares the succinctness of URC and PC formulas
and encodings of functions that can be represented by a polynomial size CNF formula.
Due to this, the theorem provides a separation between PC formulas
and PC encodings suitable when we consider compilation of a CNF formula of a moderate
size into a possibly larger formula or an encoding which is propagation complete.
A similar separation in case of URC formulas and URC encodings follows from the next
proposition proven in Section~\ref{sec:lb-URC}.

\begin{theorem} \mlabel{thm:lb-q-horn}
For every $n \ge 1$, there is a q-Horn formula $\phi_n$ of
size $O(n)$, such that
\begin{enumerate}[label={(\alph*)}]
\item every URC formula equivalent to $\phi_n$ has size $2^{\Omega(n)}$,
\item there is a PC and, hence, a URC encoding of $\phi_n$ of size $O(n)$.
\end{enumerate}
\end{theorem}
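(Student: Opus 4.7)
The plan is to exhibit an explicit q-Horn formula $\phi_n$ of size $O(n)$ built around a 2-CNF "choice" layer — something like $\bigwedge_{i=1}^n (x_i \vee y_i) \wedge (\neg x_i \vee \neg y_i)$ enforcing $y_i = \neg x_i$ — combined with a small set of Horn-like clauses on auxiliary variables that couple the choices. The weights giving q-Hornness are assigned naturally: $\gamma(x_i) = \gamma(y_i) = \tfrac{1}{2}$ and the coupling variables get weight $0$ or $1$ so that each clause satisfies $\sum_{u \in C} \gamma(u) \le 1$. The design goal of the coupling is that the formula is semantically trivial under the empty assignment (unit propagation derives nothing), but partial assignments to the $(x_i, y_i)$ produce $2^n$ distinct "propagation patterns."

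For part~(b), I would construct the $O(n)$-size PC encoding in the spirit of Theorem~\ref{thm:lb-horn}\ref{enum:lb-horn:b}: introduce auxiliary variables $z_1, \ldots, z_n$ that explicitly name which side of each binary choice is currently active, and replace the coupling clauses by a linear cascade of Horn implications on the $z_i$. This converts an exponential "implicit" branching structure into a linear chain of clauses on which unit propagation proceeds step by step. The PC property is then verified either directly from Definition~\ref{def:urc-pc}, or by invoking the characterization of PC formulas developed in Section~\ref{sec:characterize-PC}.

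For part~(a), the URC lower bound is the technical heart. I would identify an exponential family $\{\alpha_S\}_{S \subseteq [n]}$ of minimally unsatisfiable partial assignments of $\phi_n$, where $\alpha_S$ fixes each choice $(x_i, y_i)$ according to membership in $S$ and adds one extra literal from the coupling part that makes the resulting formula unsatisfiable. Given any URC formula $\psi$ equivalent to $\phi_n$, unit propagation on $\psi \wedge \alpha_S$ must derive $\bot$, and the final step of that derivation invokes some clause $C_S \in \psi$ that becomes empty under $\alpha_S$ augmented by the literals propagated before it. A fooling-set argument — showing that $C_S = C_{S'}$ forces $S = S'$ — then yields $|\psi| \ge 2^{\Omega(n)}$. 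To pin down $C_S$, I would exploit the fact that $\alpha_S$ is minimally unsatisfiable, so removing any literal of $\alpha_S$ yields a satisfiable formula and hence no propagation to $\bot$ occurs; this pins the final refuting clause to depend on essentially all of $\alpha_S$.

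The main obstacle is precisely this fooling-set step. Unlike PC formulas, where irredundancy essentially equates clauses with the semantic closures they enforce (as exploited in Section~\ref{sec:irred-PC-URC}), URC formulas allow long propagation chains, so in principle a single clause can contribute to the refutation of many different $\alpha_S$. The construction of $\phi_n$ and the family $\{\alpha_S\}$ must be delicate enough that the order of propagation, combined with the coupling structure, forces the final clause to witness the entire choice pattern $S$. Getting this "clause-uniqueness per $S$" robust against all possible URC encodings is the subtle point, and it is the reason the q-Horn construction must go beyond the pure Horn construction used in Theorem~\ref{thm:lb-horn}.
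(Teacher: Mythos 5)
Your outline follows the same general template as the paper (a linear-size q-Horn formula, a family of $2^n$ unsatisfiable partial assignments, an argument that a URC equivalent must contain a separate clause for each, plus an OBDD/cascade-style linear PC encoding for part (b)), but the step you yourself flag as the obstacle --- forcing ``clause-uniqueness per $S$'' against arbitrary URC formulas and arbitrarily long propagation chains --- is exactly the technical heart, and it is missing. Without a concrete formula and a concrete reason why a single clause cannot serve in the refutation of many of your assignments $\alpha_S$, the fooling-set argument is only a wish: a URC formula is free to refute different $\alpha_S$ through shared intermediate clauses, and nothing in your sketch rules this out. Moreover, your choice to let $\alpha_S$ fix the half-weight choice variables $(x_i,y_i)$ ``plus one extra literal from the coupling part'' leaves unclear even what the $2^n$ hard assignments are, and the remark about $2^n$ distinct ``propagation patterns'' is not by itself relevant: a URC equivalent only has to refute the unsatisfiable assignments, not reproduce any propagation behaviour.

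The paper closes this gap by engineering the formula so that no propagation chain is available at all. It takes a cycle of equivalences $x_i \Leftrightarrow x_{i+1}$ and one disequivalence $x_1 \Leftrightarrow \neg x_n$, each guarded by \emph{two} weight-$0$ activators $a_i, b_i$ (clauses such as $\neg a_i \vee \neg x_i \vee x_{i+1}$). The hard assignments are the $2^n$ positive assignments choosing one of $a_i, b_i$ per $i$; each activates the whole inconsistent cycle. Two structural facts then make the counting immediate: (i) by the symmetry of the models under flipping all of $\vek{x}$, every prime implicate containing a literal on $\vek{x}$ contains at least two such literals, so these clauses can never become unit under an assignment touching only $\vek{a}\cup\vek{b}$; and (ii) the remaining relevant prime implicates are exactly the $2^n$ all-negative clauses $\neg(\alpha\wedge\beta)$ on the activators, and unit propagation from a positive activator assignment using such clauses yields only negative activator literals, which satisfy rather than shorten the other all-negative clauses, so propagation dies after one step. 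Hence the clause $\neg(\alpha\wedge\beta)$ itself must occur in any URC equivalent, giving the $2^n$ bound with no delicate ordering analysis. This doubling of activators and the two structural observations are precisely the ingredients your proposal would need to supply; as written, the lower bound (a) is not established, and part (b) cannot be assessed because the formula it should encode is unspecified.
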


The examples of the formulas used for the separations
in theorems~\ref{thm:lb-q-horn} and~\ref{thm:lb-horn}
belong to well-known tractable classes. The results demonstrate that besides
the fact that Horn formulas are URC, no other propagation properties of
the formulas in these classes are guaranteed in the following strong sense:
an exponential number of additional implicates may be necessary to obtain
a PC formula equivalent to a given Horn formula or to obtain a URC formula
equivalent to a given q-Horn formula.

\subsection{Irredundant PC and URC Formulas}
\label{sub:irred-PC}

We prove a remarkable difference in the properties of the size of irredundant PC
and irredundant URC formulas representing a given function. A PC formula is
called a \emph{PC-irredundant} formula, if removing any clause either changes
the represented function or yields a formula that is not a PC formula
(such formulas were also called minimal propagation complete formulas
in~\cite{BM12}).
A \emph{URC-irredundant} formula is defined in a similar way. Namely, we prove:
\begin{itemize}
\item The sizes of any two PC-irredundant formulas for the same
      function differ at most by a factor polynomial in the number
      of the variables.
\item There are URC-irredundant formulas for the same function such that
      one has size polynomial and the other has size exponential
      in the number of the variables.
\end{itemize}
A PC-irredundant formula can be obtained from any PC formula in polynomial
time by repeated removal of \emph{absorbed clauses}, see~\cite{BM12}.
The first statement means that this is also a guarantee of a small
size of a PC formula, if we disregard factors polynomial in the number
of the variables. On the other hand, the second statement means that
irredundancy cannot be used in the same way for URC formulas.

The statement above concerning PC formulas follows from the next theorem
proven in Section~\ref{sub:proof-irred-PC}.

\begin{theorem} \mlabel{thm:PC-irredundant}
If $\phi_1$ and $\phi_2$ are PC-irredundant representations
of the same function of $n$ variables, then
$|\phi_2| \le n^2 |\phi_1|$.
\end{theorem}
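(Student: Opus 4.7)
The plan is to use the absorption characterization of PC formulas from~\cite{BM12} together with a careful comparison of unit propagation in $\phi_1$ and in $\phi_2 \setminus \{C\}$ for each clause $C \in \phi_2$, and then to conclude via a counting argument.

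First, I would recall the following reformulation of PC-irredundancy from~\cite{BM12}: a formula $\phi$ is PC iff it absorbs every implicate of the represented function, where $\phi$ absorbs $C$ iff $\phi \wedge \neg C \vdash_1 \bot$; and $\phi$ is PC-irredundant iff no clause $C \in \phi$ is absorbed by $\phi \setminus \{C\}$. Since $\phi_1$ and $\phi_2$ represent the same function $f$, every $C \in \phi_2$ is an implicate of $f$, and hence $\phi_1 \wedge \neg C \vdash_1 \bot$. For each such $C$, let $\beta_C$ denote the set of literals obtained by unit propagation from $(\phi_2 \setminus \{C\}) \wedge \neg C$; by PC-irredundancy, $\beta_C$ is consistent and does not witness $\bot$. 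Using the observations $\beta_C \supseteq \neg C$ and that $\beta_C$ is UP-closed with respect to $\phi_2 \setminus \{C\}$, I obtain $\phi_1 \wedge \beta_C \vdash_1 \bot$ while UP in $\phi_2 \setminus \{C\}$ from $\beta_C$ derives nothing new. Tracking the first UP step available in $\phi_1$ from $\beta_C$ then extracts a pair $(D_C, l_C)$ with $D_C \in \phi_1$, $l_C \in D_C$, $D_C \setminus \{l_C\} \subseteq \neg \beta_C$, and $l_C \notin \beta_C$; moreover $(\phi_2 \setminus \{C\}) \wedge \neg(D_C \setminus \{l_C\}) \not\vdash_1 l_C$, because this starting assignment is contained in $\beta_C$ and UP in $\phi_2 \setminus \{C\}$ cannot escape $\beta_C$.

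Finally, I would bound $|\phi_2|$ by a counting argument. The map $\Phi\colon \phi_2 \to \phi_1 \times \lit{\vek{x}}$ sending $C$ to $(D_C, l_C)$ has range of size at most $2n |\phi_1|$, so it suffices to show each fiber of $\Phi$ has $O(n)$ elements. If $C, C' \in \phi_2$ both map to $(D, l)$, then $\beta_C$ and $\beta_{C'}$ both contain $\neg(D \setminus \{l\})$, with $\beta_C \supseteq \neg C$ and $\beta_{C'} \supseteq \neg C'$, are UP-closed under $\phi_2 \setminus \{C\}$ and $\phi_2 \setminus \{C'\}$ respectively, and neither contains $l$. The aim is to charge each $C$ in a fiber $\Phi^{-1}(D, l)$ to a distinct literal it must contain that forces the corresponding UP closure to omit $l$, bounding the fiber size by the number of available literals $2n$, and hence giving $|\phi_2| \le 2n \cdot 2n \cdot |\phi_1| = O(n^2)|\phi_1|$.

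The main obstacle is establishing the $O(n)$ multiplicity bound of $\Phi$: one must show that within a single fiber, distinct clauses of $\phi_2$ are witnessed by distinct ``certificate'' literals, which requires careful structural analysis of how $\beta_C$ depends on $C$ and of the interaction between the clauses of $\phi_2 \setminus \{C\}$ that generate $\beta_C$ and the clause $D$ from $\phi_1$ that is blocked in $\phi_2 \setminus \{C\}$. A secondary concern is making the choice of $(D_C, l_C)$ canonical --- for example by fixing an arbitrary ordering on $\phi_1$ and on literals and choosing the lexicographically least applicable clause-literal pair --- so that $\Phi$ is well-defined; and verifying that both underlying cases of PC-irredundancy ($C$ essential as an implicate versus $C$ necessary for the PC property) are subsumed uniformly by the absorption formulation, so that no case distinction is required in the counting step.
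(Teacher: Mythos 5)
Your plan has two genuine gaps, one of which sits exactly at the step you yourself flag as the ``main obstacle.'' First, the well-definedness of $\beta_C$ is not secured by PC-irredundancy. You state that absorption of $C$ by $\phi$ means $\phi \wedge \neg C \vdash_1 \bot$, but that is 1-provability, not absorption; absorption requires that for \emph{every} $l \in C$, $\phi \wedge \neg(C\setminus\{l\})$ derives $l$ or $\bot$ by unit propagation. PC-irredundancy of $\phi_2$ only tells you that $C$ is not absorbed by $\phi_2\setminus\{C\}$ (or that removing it changes the function); it is entirely consistent with this that $(\phi_2\setminus\{C\})\wedge \neg C \vdash_1 \bot$, i.e.\ that $C$ is 1-provable from the rest. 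In that case your $\beta_C$ is contradictory and the extraction of the pair $(D_C,l_C)$ collapses. This can be repaired by working, for each $C$, with a witnessing literal $l\in C$ such that $(\phi_2\setminus\{C\})\wedge\neg(C\setminus\{l\})$ derives neither $l$ nor $\bot$ (which non-absorption does give you, in the case where the function is unchanged), but that is a different construction than the one you wrote down.

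Second, and more seriously, the $O(n)$ bound on the fibers of $\Phi$ is asserted, not proved, and nothing in the sketch indicates why distinct clauses $C,C'\in\phi_2$ mapping to the same pair $(D,l)$ must carry distinct ``certificate'' literals. The closures $\beta_C$ and $\beta_{C'}$ are computed in different formulas ($\phi_2\setminus\{C\}$ versus $\phi_2\setminus\{C'\}$), can overlap arbitrarily, and a single blocked pair $(D,l)$ of $\phi_1$ could a priori be the lexicographically least witness for many clauses of $\phi_2$ simultaneously; no structural property you cite rules this out. Since the whole bound rests on this multiplicity claim, the proof is incomplete at its core. Note that the paper's argument avoids any injectivity or fiber counting by charging in the opposite direction: for each $C\in\phi_1$ and $l\in C$, propagation completeness of $\phi_2$ yields a unit-resolution derivation of $l$ or $\bot$ from $\phi_2\wedge\neg(C\setminus\{l\})$ using at most $n$ clauses $\theta_{C,l}\subseteq\phi_2$; the union $\phi_2'$ of these sets has size at most $n^2|\phi_1|$, is equivalent to $\phi_2$ (it implies $\phi_1$), and is PC because every clause of $\phi_1$ is absorbed by $\phi_2'$, so PC-irredundancy of $\phi_2$ forces $\phi_2'=\phi_2$. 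There the bound $|\theta_{C,l}|\le n$ is immediate from the derivation structure, which is precisely the quantitative control your fiber argument lacks.
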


The proof of this statement is similar to the proof of the known
result~\cite{HK93} that the sizes of two equivalent irredundant Horn formulas
differ at most by a factor at most $n-1$. For the PC formulas, the factor
is larger, since we have to work with absorbed clauses instead
of 1-provable implicates. This difference can be eliminated by replacing
the formula with its implicational dual rail encoding which represents a specific
Horn function. Using the characterization of PC formulas in
Theorem~\ref{thm:characterize-PC} below one can obtain a weaker
form of Theorem~\ref{thm:PC-irredundant} as a direct translation of
the properties of irredundant Horn formulas.

The statement above concerning URC formulas follows from the next proposition
proven in Section~\ref{sec:lb-irred-URC}.

\begin{proposition} \mlabel{prop:URC-irredundant}
For every $n \ge 1$, there is a PC formula $\phi_{n,1}$ of size $O(n)$ and
an equivalent URC-irredundant formula $\phi_{n,2}$ of size $2^{\Omega(n)}$.
\end{proposition}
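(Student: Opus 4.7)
My plan is to exhibit a concrete Boolean function $f_n$ on $O(n)$ variables together with two CNF representations of it: a small PC formula $\phi_{n,1}$ of size $O(n)$ and a URC-irredundant formula $\phi_{n,2}$ of size $2^{\Omega(n)}$. The function $f_n$ should be chosen to admit a simple small PC representation (for instance, a single long clause or a short conjunction of Horn or 2-clauses), so that $\phi_{n,1}$ is immediate; the bulk of the proof lies in constructing $\phi_{n,2}$ and verifying its properties.

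For $\phi_{n,2}$, I would assemble a family of exponentially many implicates of $f_n$ indexed by some exponential combinatorial structure, say subsets $S \subseteq \{1,\ldots,n\}$ or binary strings of length $\Omega(n)$. Each clause $C_S$ should be crafted so that its removal either changes the represented function or breaks URC. A natural way to guarantee this is to arrange that $C_S$ is the unique clause in $\phi_{n,2}$ falsified by some specific unsatisfiable full assignment $\alpha_S$, making it function-essential (its removal lets $\alpha_S$ satisfy the remainder), or alternatively that $C_S$ is the unique clause whose reduction under a specific unsatisfiable partial assignment produces a unit literal needed by unit propagation to eventually derive $\bot$, making it URC-essential. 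To ensure that $\phi_{n,2}$ is URC on \emph{every} unsatisfiable partial assignment, and not merely the ``designated'' ones that witness individual essentiality, additional shorter clauses may be interspersed; these must themselves be URC-essential and must not absorb any $C_S$.

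The principal technical difficulty is the tension between the URC property of $\phi_{n,2}$ as a whole and URC-irredundancy of its individual clauses. Short auxiliary clauses that efficiently refute ``short'' partial assignments tend to absorb long clauses via alternative unit-propagation paths, destroying the essentiality of the long family. Resolving this requires a delicate combinatorial design in which short propagation-enabling clauses are interleaved with the large family of long essentials so that both properties hold simultaneously. For each of the $2^{\Omega(n)}$ long clauses one exhibits a distinct witnessing assignment; the short auxiliary clauses require a similar but more subtle analysis. I expect this balancing act---making each of exponentially many clauses individually essential while preserving the global URC property---to be the main technical hurdle, and it stands in sharp contrast with the situation for PC formulas addressed in Theorem~\ref{thm:PC-irredundant}, where no such pathological inflation is possible.
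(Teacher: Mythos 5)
Your proposal is a strategy outline rather than a proof: it states what the two formulas should look like and correctly identifies the central obstacle (each of exponentially many clauses must be individually necessary for unit refutation completeness while the whole formula stays URC and the function still has a linear-size PC representation), but it then explicitly defers exactly that obstacle (``I expect this balancing act \ldots to be the main technical hurdle'') without exhibiting any concrete function, any concrete clause family, or any argument for the three required properties. That deferred step is the entire mathematical content of the proposition, so as it stands there is a genuine gap. Moreover, the specific mechanism you lean on first---making each long clause $C_S$ function-essential via a unique falsifying total assignment---pulls against the URC requirement: with only long essential clauses, short unsatisfiable partial assignments are not refuted by unit propagation, and any short implicates you add to repair this tend to subsume or absorb the long clauses, destroying their essentiality. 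The actual construction resolves this tension differently: the exponentially many clauses are \emph{not} function-essential (they are resolution consequences of a linear-size formula) but are each URC-essential.

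Concretely, the paper takes $\gamma_m = \bigl(\bigvee_{i=1}^m a_i\bigr) \wedge \bigwedge_{i=1}^m \delta_i$ with gadgets $\delta_i = (\neg a_i \vee b_i)\wedge(\neg a_i \vee c_i)\wedge(\neg b_i \vee \neg c_i \vee d_i)$, whose point is that $\delta_i \wedge \neg d_i \models \neg a_i$ while $\delta_i \wedge \neg d_i \not\vdash_1 \neg a_i$. Adding the $m$ clauses $\neg a_i \vee d_i$ yields an equivalent PC formula $\gamma'_m$ of size $O(m)$ (your $\phi_{n,1}$), whereas adding instead the clauses $\bigvee_{i\notin I} a_i \vee \bigvee_{i\in I} d_i$ for all nonempty $I$ of \emph{even} cardinality yields an equivalent URC formula $\gamma''_m$ of size $3m+2^{m-1}$; the parity restriction guarantees that under the assignment falsifying one such clause, every other added clause has at least two unfalsified literals (the symmetric difference of two distinct even sets has size at least $2$), so removing any added clause breaks URC, and every URC-irredundant subformula has size at least $2^{m-1}$. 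Your proposal would need to supply a design of this kind, together with proofs analogous to Lemmas~\ref{lem:gamma-prime-PC}, \ref{lem:gamma-dprime-URC} and~\ref{lem:gamma-dprime-irred}, before it constitutes a proof of Proposition~\ref{prop:URC-irredundant}.
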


\subsection{A Characterization of PC Formulas}
\label{sub:characterize-PC}

We use a simulation of the unit propagation in a CNF formula $\phi$
by an implicational encoding $\DR(\phi)$
described in Section~\ref{sub:dual-rail:def} and prove that
$\phi$ is a PC formula, if and only if $\DR(\phi)$
expresses a Horn function representing semantic consequence on the
literals implied by $\phi$. Since this Horn function is represented
by the implicational dual rail encoding associated
with the set of all the prime implicates of $\phi$, we
have the following theorem proven in Section~\ref{sec:characterize-PC}.

\begin{theorem} \label{thm:characterize-PC}
Let $\phi$ be a satisfiable CNF formula and let $\psi$ be the CNF formula
consisting of all prime implicates of $\phi$. Then $\phi$ is a PC formula
if and only if $\DR(\phi)$ and $\DR(\psi)$ are equivalent.
\end{theorem}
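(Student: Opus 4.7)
The plan is to use the dual-rail simulation to translate both unit propagation in $\phi$ and semantic entailment from $\phi$ into closure operators of Horn formulas on the meta-variables, and then observe that the PC property of $\phi$ is precisely the coincidence of these two closure operators.

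First I would prove, for any CNF formula $\chi$ on $\vek{x}$ and any partial assignment $\alpha\subseteq\lit{\vek{x}}$, that identifying $\alpha$ with the consistent set of meta-variables $\alpha^{\meta}=\{\litVar{l}:l\in\alpha\}$ gives the correspondence: $\chi\wedge\alpha\vdash_1 l$ iff $\DR(\chi)\wedge\alpha^{\meta}\models\litVar{l}$, and $\chi\wedge\alpha\vdash_1\bot$ iff $\DR(\chi)\wedge\alpha^{\meta}$ is unsatisfiable. This is a direct induction on derivation length, because each implicational clause~\eqref{eq:DR-clause} in $\DR(\chi)$ encodes exactly one unit-resolution step on a clause of $\chi$, and the consistency clauses $(\neg\litVar{x}\vee\neg\litVar{\neg x})$ are precisely what translates the derivation of complementary literals in $\chi$ into unsatisfiability of $\DR(\chi)\wedge\alpha^{\meta}$.

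Next, applying this to $\chi=\psi$ and using the standard fact that $\phi\wedge\alpha\models l$ iff there is a prime implicate $C$ of $\phi$ with $l\in C$ and $C\setminus\{l\}\subseteq\{\neg e:e\in\alpha\}$, one obtains $\psi\wedge\alpha\vdash_1 l$ iff $\phi\wedge\alpha\models l$, and likewise $\psi\wedge\alpha\vdash_1\bot$ iff $\phi\wedge\alpha$ is unsatisfiable; the hypothesis that $\phi$ is satisfiable ensures that $\psi$ is well defined and contains no empty clause. Combined with the first step this yields $\DR(\psi)\wedge\alpha^{\meta}\models\litVar{l}$ iff $\phi\wedge\alpha\models l$, and analogously for unsatisfiability.

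Finally I would conclude as follows. Two Horn formulas on the same variables are equivalent iff they induce the same closure operator on every consistent seed, and since $\DR(\phi)$ and $\DR(\psi)$ share the same consistency clauses, it suffices to compare their closures on seeds of the form $\alpha^{\meta}$. By the previous two steps, $\DR(\phi)\equiv\DR(\psi)$ is thus equivalent to the requirement that for every $\alpha\subseteq\lit{\vek{x}}$ and every $l\in\lit{\vek{x}}$, $\phi\wedge\alpha\vdash_1 l$ iff $\phi\wedge\alpha\models l$, together with $\phi\wedge\alpha\vdash_1\bot$ iff $\phi\wedge\alpha\models\bot$, which is precisely Definition~\ref{def:urc-pc} of being PC. I expect the main technical obstacle to be the ``semantic $\Rightarrow$ unit propagation in $\psi$'' direction of the second step, which relies on the characterization of semantic entailment by a single prime implicate, together with careful bookkeeping so that the contradictory case is uniformly handled by the consistency clauses.
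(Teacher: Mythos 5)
Your overall route is the same as the paper's: translate unit propagation in $\phi$ and semantic entailment (via the prime implicates $\psi$) into closures on the meta-variables, use the fact that Horn formulas with the shared consistency clauses are equivalent iff they induce the same closure on every consistent seed, and identify the coincidence of the two closures with propagation completeness. This is exactly the content of Lemma~\ref{lem:equal-closures}, Proposition~\ref{prop:characterize-DR} and Proposition~\ref{prop:characterize-PC} in the paper (organized around the Horn function $h_f$); your direct comparison of $\DR(\phi)$ with $\DR(\psi)$ just unfolds that packaging.

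There is, however, a genuine imprecision in how you handle the contradictory case, and as written it breaks the ``PC implies $\DR(\phi)\equiv\DR(\psi)$'' direction. The biconditional of your first step, $\chi\wedge\alpha\vdash_1 l$ iff $\DR(\chi)\wedge\alpha^{\meta}\models\litVar{l}$, fails from right to left: when $\chi\wedge\alpha\vdash_1\bot$, the formula $\DR(\chi)\wedge\alpha^{\meta}$ is unsatisfiable (a consistency clause is violated), hence entails \emph{every} $\litVar{l}$, while $\chi\wedge\alpha$ need not derive the unit clause $l$ itself. Correspondingly, the condition you reach in the last step --- ``$\phi\wedge\alpha\vdash_1 l$ iff $\phi\wedge\alpha\models l$, together with $\phi\wedge\alpha\vdash_1\bot$ iff $\phi\wedge\alpha\models\bot$'' --- is strictly stronger than Definition~\ref{def:urc-pc}, which only demands $\phi\wedge\alpha\vdash_1 l$ \emph{or} $\phi\wedge\alpha\vdash_1\bot$ whenever $\phi\wedge\alpha\models l$. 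Concretely, $\phi=a\wedge b$ is PC, and with $\alpha=\{\neg a\}$ we have $\phi\wedge\alpha\models\neg b$ and $\phi\wedge\alpha\vdash_1\bot$, but $\phi\wedge\alpha\not\vdash_1\neg b$; so a PC formula need not satisfy your final condition, and your chain of equivalences would wrongly deny $\DR(\phi)\equiv\DR(\psi)$ for such a $\phi$. The repair is exactly the paper's device: state the dual-rail correspondence as $\DR(\chi)\wedge\alpha^{\meta}\models\litVar{l}$ iff ($\chi\wedge\alpha\vdash_1 l$ or $\chi\wedge\alpha\vdash_1\bot$) --- equivalently, work with the closure $\clup$, which by definition contains all literals once $\bot$ is derivable --- and do the analogous thing on the semantic side with $\clsem$. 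With that adjustment, equivalence of $\DR(\phi)$ and $\DR(\psi)$ becomes: for all $\alpha$ and $l$, $\phi\wedge\alpha\models l$ implies $\phi\wedge\alpha\vdash_1 l$ or $\phi\wedge\alpha\vdash_1\bot$ (the converse being soundness of unit resolution), which is precisely PC; your prime-implicate argument for $\psi$ and the closure-operator argument for Horn equivalence then go through as you intend.
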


\subsection{An Encoding for q-Horn Formulas}
\label{sub:q-horn:result}

We prove that although q-Horn formulas can be strongly non-URC in the sense
formulated in Theorem~\ref{thm:lb-q-horn}, every q-Horn formula has a URC encoding
of size polynomial in the size of the original formula.

\begin{theorem}
   \mlabel{thm:qh-enc:main} 
Assume that \(\varphi(\vek{x})\) is a q-Horn formula with a valuation $\gamma$
representing a q-Horn function \(f(\vek{x})\). Moreover, assume
a partition of the variables $\vek{x}=\vek{x}_1 \cup \vek{x}_2$, such that
$\gamma(x) \in \{0,1\}$ for every $x \in \vek{x}_1$ and
$\gamma(x) = 1/2$ for every $x \in \vek{x}_2$.
Then there is a URC encoding
   \(\psi(\vek{x}, \vek{y})\) of \(f(\vek{x})\) satisfying
   \(|\vek{y}|=O(|\vek{x}_2|^2)\),
   \(|\psi|=O(|\varphi|+|\vek{x}_2|^3)\), and
   \(\cnflen{\psi}=O(\cnflen{\varphi}+|\vek{x}_2|^3)\).
\end{theorem}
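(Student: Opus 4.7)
The plan is to rename the variables of $\vek{x}_1$ so that the valuation takes a standard shape on them, decompose $\varphi$ according to how each clause mixes $\vek{x}_1$ and $\vek{x}_2$, and then design auxiliary variables that explicitly track which binary constraints over $\vek{x}_2$ are currently forced by the $\vek{x}_1$-part of the assignment, together with a transitive closure that allows unit propagation to mimic 2-SAT reasoning.

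Concretely, for every $x \in \vek{x}_1$ with $\gamma(x) = 1$ I replace $x$ and $\neg x$ throughout $\varphi$ by $\neg x$ and $x$, respectively, so that after renaming every literal on $\vek{x}_1$ has weight $0$ if positive and $1$ if negative. The constraint $\sum_{u \in C} \gamma(u) \le 1$ then forces every clause of $\varphi$ into one of three shapes: a Horn clause on $\vek{x}_1$; a purely positive clause on $\vek{x}_1$; or a clause with one or two literals from $\vek{x}_2$ together with an arbitrary collection of positive literals from $\vek{x}_1$. For each unordered pair $\{u,v\} \subseteq \lit{\vek{x}_2}$, allowing $u = v$, I introduce an auxiliary variable $y_{\{u,v\}}$ with intended meaning ``under the current partial assignment, the q-Horn constraints force the clause $u \vee v$''. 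The encoding $\psi$ then consists of (i) all Horn and purely positive clauses from $\varphi$, kept verbatim; (ii) for every mixed clause $u \vee v \vee P$ (with $v$ possibly equal to $u$ if only one $\vek{x}_2$-literal is present) the clause $y_{\{u,v\}} \vee P$; (iii) for every pair $\{u,v\}$ the consistency clause $\neg y_{\{u,v\}} \vee u \vee v$; and (iv) for every triple $a,b,c \in \lit{\vek{x}_2}$ the transitive-closure clause $\neg y_{\{a,b\}} \vee \neg y_{\{\neg b, c\}} \vee y_{\{a,c\}}$. Counting then gives $|\vek{y}| = O(|\vek{x}_2|^2)$, $|\psi| = O(|\varphi| + |\vek{x}_2|^3)$ and $\cnflen{\psi} = O(\cnflen{\varphi} + |\vek{x}_2|^3)$. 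That $\psi$ encodes $f$ is direct: starting from a model of $\varphi$, set $y_{\{u,v\}}$ to $1$ exactly when $u \vee v$ is satisfied; conversely, any model of $\psi$ satisfies every original clause, either verbatim via (i) or via the combination of (ii) and (iii).

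The substantive work is proving URC, which I plan to handle by contrapositive: assume that unit propagation from $\psi \wedge \alpha$ reaches a fixpoint $\alpha^\star$ without deriving $\bot$ and build a total model extending $\alpha^\star$. The construction proceeds in two stages. In the first stage I consider the clauses in (i) together with those clauses from (ii) whose $y_{\{u,v\}}$ component has already been set to $0$ at the fixpoint; this subformula has no $\vek{x}_2$-content and is a mixture of Horn clauses and purely positive clauses, whose satisfiability is captured by the standard unit propagation on the Horn part combined with a greedy choice of a true literal in each surviving positive clause, and using the absence of $\bot$ at the fixpoint I argue that a consistent $\vek{x}_1$-extension exists. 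In the second stage, the clauses in (ii) whose $\vek{x}_1$-parts $P$ are falsified by this extension force the corresponding $y_{\{u,v\}}$ to $1$, and together with (iii) and the transitive closure (iv) the remaining constraint on $\vek{x}_2$ reduces to a 2-CNF whose implication graph is already closed under composition; by the Aspvall--Plass--Tarjan criterion this 2-CNF is satisfiable precisely when no $y_{\{u,u\}}$ appears alongside a derivation of $\neg u$, and any such pair would have triggered $\bot$ at the fixpoint, contradicting the hypothesis. The principal obstacle is coordinating the two stages: one must ensure that the greedy choices made in stage one do not invalidate stage two, and that every implication which unit propagation would need has indeed been triggered by the interaction of (ii) and (iv) at the right moment. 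Establishing that the $y$-variables faithfully mirror the q-Horn implication structure, together with the 2-SAT completeness enabled by (iv), is what drives the proof through.
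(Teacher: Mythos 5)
Your encoding is essentially the one the paper builds in Table~\ref{tab:qh-enc:clauses}: your families (ii), (iii), (iv) are the groups~\ref{groupQH2}, \ref{groupQH5}, \ref{groupQH3}, and your diagonal variables $y_{\{u,u\}}$ play the role of group~\ref{groupQH4}; the paper additionally keeps group~\ref{groupQH6}, i.e.\ $\neg u\vee\litVar{u\lor v}$, which makes each auxiliary variable \emph{equivalent} to its clause. The genuine gaps are in the URC argument. First, the stage-one analysis is not right as stated: by q-Hornness the clauses on $\vek{x}_1$ alone contain at most one weight-one literal, so after your renaming they form a renamable Horn set, not an arbitrary mixture of Horn and purely positive clauses --- for such a mixture (which contains monotone SAT) the absence of a unit-propagation contradiction does not certify satisfiability, so the step ``unit propagation plus a greedy choice'' is unsound in the generality you claim. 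More importantly, the $\vek{x}_1$-extension cannot be an unconstrained greedy choice: it must be the autark assignment giving every unforced $\vek{x}_1$-variable the value of its weight-zero literal, because autarky is exactly what guarantees that the only clauses of type (ii) whose $\vek{x}_1$-part $P$ is falsified by the extension are those whose $P$ was already falsified at the fixpoint, i.e.\ whose $y$-variable is already true there; otherwise stage two has to satisfy binary constraints that were never visible to unit propagation and the closure argument collapses. You explicitly name this coordination as ``the principal obstacle'' but do not resolve it, and it is precisely the crux; the paper resolves it by following Algorithm~\ref{alg:q-horn-sat} of Boros--Crama--Hammer and its autark assignment.

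Second, stage two is only gestured at. The statement you actually need is that the set of binary clauses $\{u\vee v \mid y_{\{u,v\}}$ true at the fixpoint$\}$ together with the fixpoint literals on $\vek{x}_2$ is closed under resolution (your (iv) gives binary--binary resolvents, the diagonal plus (iii) gives the unit resolvents of the form $u$ from $u\vee v$ and $u\vee\neg v$, and (iii) gives unit--binary resolvents), hence satisfiable whenever no contradiction was derived; your rendering of the Aspvall--Plass--Tarjan criterion (``no $y_{\{u,u\}}$ alongside a derivation of $\neg u$'') is not that criterion and is unsupported. You also never say how the $y$-variables left unassigned at the fixpoint are set in the constructed model; setting them by the semantics ``true iff $u\vee v$ holds'' can violate an instance of (iv) whose conclusion variable was assigned false, so a specific choice (false) must be made and checked against (ii). Finally, URC must hold for partial assignments over \emph{all} variables of $\psi$, including negative literals on the auxiliary variables; the paper needs two separate lemmas for this and eliminates negative $y$-literals through the equivalence enforced by groups~\ref{groupQH5} and~\ref{groupQH6}, a route unavailable to you because you dropped the $\neg u\vee y_{\{u,v\}}$ direction, and your sketch never examines this case. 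The encoding itself is probably salvageable, but these missing steps are the substance of the theorem.
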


The encoding guaranteed in this theorem is not q-Horn in general.
Note also that the propagation strength obtained for this encoding is weaker than
what we obtain for the function used in Theorem~\ref{thm:lb-q-horn}, which has
a polynomial size PC encoding.


\section{Lower Bounds on the Size of Specific Formulas}
\label{sec:lb}

In this section, we present three examples of functions suitable
for proving a lower bound on the size of a specific formula. The
examples serve as proofs of theorems~\ref{thm:lb-horn}
and~\ref{thm:lb-q-horn}.

\subsection{Lower Bound on the Size of a PC Formula}
\label{sec:lb-PC}

In this section we prove Theorem~\ref{thm:lb-horn}.
Assume that $\phi$ is a satisfiable CNF formula not containing
the variable $x$ and let
\begin{equation} \label{eq:non-PC-1}
   \psi = \bigwedge_{C \in \phi} (\neg x \vee C) \,.
\end{equation}
One can verify that the set of prime implicates of $\psi$ is the set of all clauses
of the form $\neg x \vee C$, where $C$ is a prime implicate of $\phi$,
see also~\cite{BM12}.
Let us verify that the only prime PC representation of $\psi$ is
the set of all its prime implicates. Assume that $\psi'$ is a prime
formula equivalent to $\psi$. If a prime implicate $\neg x \vee D$
of $\psi$ is not in $\psi'$, let $\alpha$ be the
partial assignment $\neg D$. Clearly, $\psi' \wedge \alpha$
implies $\neg x$, however, every clause of $\psi'$ has the form
$\neg x \vee C$, where $C$ is a prime implicate of $\phi$ different
from $D$. It follows that $C$ contains at least one literal
not falsified by $\alpha$ and, hence, the unit propagation
does not derive any additional literal from $\alpha$ together
with $\neg x \vee C$. This implies that $\phi'$ is not a PC formula,
hence the only prime PC formula equivalent to $\psi$ consists of
all its prime implicates.

Each clause of the formula~\eqref{eq:non-PC-1} contains the literal $\neg x$.
Let us demonstrate that a lower bound on the size of a PC formula
based on the set of all prime implicates can be obtained also for
a formula, where different clauses are extended with literals on
different variables.

\begin{lemma} \mlabel{lem:lb-cycle}
Assume that $\phi = \bigwedge_{i=1}^m C_m$ is a satisfiable CNF
formula with $p$ prime implicates. Let $x_1, \ldots, x_m$ be new
variables and let
$$
\psi = \bigwedge_{i=1}^{m-1} (\neg x_i \vee x_{i+1}) \wedge
(\neg x_m \vee x_1) \wedge \bigwedge_{i=1}^m (\neg x_i \vee C_i)
$$
Then, the number of the prime implicates of $\psi$ is $mp + m(m-1)$
and the size of a smallest PC representation of $\psi$ is $p + m$.
\end{lemma}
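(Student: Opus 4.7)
My plan has three steps: count the prime implicates of $\psi$, build a PC formula of size $p+m$, and prove a matching lower bound.

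\textbf{Step 1 (prime implicates).} The cycle clauses make all $x_i$ agree in any model, so the models of $\psi$ are either $\vek{x}=\vek{0}$ with arbitrary $\vek{y}$, or $\vek{x}=\vek{1}$ together with any $\vek{y}\models\phi$. The $\vek{x}=\vek{0}$ branch forces every non-tautological implicate of $\psi$ to contain some $\neg x_i$. A prime implicate with a positive $\vek{x}$-literal is then necessarily $\neg x_i\vee x_k$ with $i\neq k$, giving $m(m-1)$ clauses; a prime implicate with no positive $\vek{x}$-literal must (from the $\vek{x}=\vek{1}$ branch) have its $\vek{y}$-part equal to a prime implicate $D$ of $\phi$ and, by primality, exactly one $\neg x_i$, giving $mp$ clauses.

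\textbf{Step 2 (upper bound).} I propose
\[
\phi^\star=\bigwedge_{i=1}^{m-1}(\neg x_i\vee x_{i+1})\wedge(\neg x_m\vee x_1)\wedge\bigwedge_{D}(\neg x_1\vee D),
\]
where $D$ ranges over the prime implicates of $\phi$, so $|\phi^\star|=p+m$. Both $\phi^\star$ and $\psi$ have the same models. For PC I check that each prime implicate of $\psi$ is absorbed: the clauses $\neg x_i\vee x_j$ are handled by forward/backward cycle propagation; for $\neg x_i\vee D$, setting $x_i=1$ together with $\neg(D\setminus\{l\})$ first propagates through the cycle to $x_1=1$, which turns $\neg x_1\vee D$ into the unit $l$, while setting $\neg D$ turns $\neg x_1\vee D$ into the unit $\neg x_1$, which then backward-propagates through the cycle to any $\neg x_i$.

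\textbf{Step 3 (lower bound).} Let $\phi'$ be an arbitrary PC representation of $\psi$ and split its clauses into Type-A (no positive $\vek{x}$-literal) and Type-B (some positive $\vek{x}$-literal). For Type-A, fix a prime implicate $D$ of $\phi$ and consider $\alpha_D=\neg D$. By Step 1, $\psi\wedge\alpha_D\models\neg x_j$ for every $j$ while remaining satisfiable, so PC forces UP to take a first step using a clause $C_1\in\phi'$ with $C_1\setminus\{l_1\}\subseteq D$, hence $C_1\subseteq D\cup\{l_1\}$. The implicate classification from Step 1 rules out $l_1$ being a $\vek{y}$-literal (any such $C_1$ would be a pure-$\vek{y}$ non-tautological implicate) and rules out $l_1=x_k$ (would force a subset of $D$ to be a $\vek{y}$-tautology), leaving $l_1=\neg x_{j^\star}$ and $C_1=\{\neg x_{j^\star}\}\cup D$. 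Distinct $D$'s yield clauses with distinct $\vek{y}$-parts, so $\phi'$ contains at least $p$ Type-A clauses. For Type-B, fix any $\vek{y}^\star\models\phi$ and for each $i$ take $\beta_i=\{x_j=1\}\cup\vek{y}^\star$ with $j\neq i$; this assignment implies $x_i$, so UP uses some clause $C_i^\star$ containing $x_i$. Since UP from $\beta_i$ never sets any $x_k=0$, a positive $x_k$-literal with $k\neq i$ cannot be falsified at the moment $C_i^\star$ becomes unit, so $x_i$ is the unique positive $\vek{x}$-literal of $C_i^\star$; distinct $i$ therefore yield distinct Type-B clauses. Types A and B are disjoint by definition, so $|\phi'|\ge p+m$.

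The main obstacle is the first-derivation analysis for Type-A, since a priori UP could first derive a $\vek{y}$-literal; this is ruled out by the implicate classification of Step 1 together with the observation that $l_1$'s variable lies outside $D$ (as $\alpha_D$ already assigns precisely the variables of $D$), so $\{l_1\}\cup D$ cannot be a tautology.
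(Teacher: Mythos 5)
Your proof is correct, and its skeleton matches the paper's: the same classification of the prime implicates of $\psi$, the same optimal formula (cycle clauses plus $\neg x_1 \vee D$ over the prime implicates $D$ of $\phi$), and the same key device for the hard part of the lower bound, namely the assignment $\neg D$ showing that some clause $\neg x_{j} \vee D$ must be present for every prime implicate $D$. You diverge from the paper in three sub-arguments, each legitimately. First, for the upper bound you verify absorption of all prime implicates directly (forward/backward cycle propagation, then firing $\neg x_1 \vee D$), whereas the paper invokes the result of~\cite{BM12} that a prime formula from which all prime implicates follow by non-merge resolution is PC; your check is more self-contained, the paper's citation is shorter. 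Second, for the count of $m$ ``cycle'' clauses the paper restricts the formula by a satisfying assignment of $\phi$ and argues the residual formula must define the equivalence of $x_1,\ldots,x_m$, hence needs $m$ binary clauses; your Type-B argument instead runs unit propagation from $x_j \wedge \vek{y}^\star$ and uses soundness of unit resolution to see that the clause deriving $x_i$ has $x_i$ as its unique positive $x$-literal. Both work; yours stays entirely inside the propagation framework and, importantly, does not need the representation to be prime. That is the third difference: the paper assumes a prime PC representation (implicitly using that any PC formula can be replaced by an equivalent prime PC formula of no larger size), while your Type-A/Type-B split applies to an arbitrary PC representation, so you avoid that WLOG step altogether. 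One spot you should spell out: after concluding $l_1=\neg x_{j^\star}$ and $C_1\setminus\{l_1\}\subseteq D$, you assert $C_1=\{\neg x_{j^\star}\}\cup D$; the justification is that the $\vek{y}$-part of $C_1$ is an implicate of $\phi$ (look at models with $\vek{x}=\vek{1}$) contained in the prime implicate $D$, hence equal to $D$ --- a one-line fix, needed so that distinct $D$ really give distinct Type-A clauses.
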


\begin{proof}
Let $\Pi$ be the set of all prime implicates of $\phi$.
There are two types of prime implicates of $\psi$. The first type are
the clauses $\neg x_i \vee x_j$ where $i \not= j$, $1 \le i,j \le m$.
The second type are clauses $\neg x_i \vee C$ where $1 \le i \le m$
and $C \in \Pi$. Hence, the number of the prime implicates of $\psi$
is $mp + m(m-1)$.

Consider the formula
$$
\psi' = \bigwedge_{i=1}^{m-1} (\neg x_i \vee x_{i+1}) \wedge
(\neg x_m \vee x_1) \wedge \bigwedge_{C \in \Pi} (\neg x_1 \vee C) \,.
$$
This is a subset of the prime implicates of $\psi$ and all prime implicates
of $\psi$ can be obtained by non-merge resolution from it. It follows by results of~\cite{BM12} that
$\psi'$ is a PC representation of $\psi$. Since $\psi'$ has size $p + m$,
there is a PC representation of this size.

Assume, $\psi''$ is any prime PC representation of $\psi$. By applying
a satisfying assignment of $\phi$ to $\psi''$, the clauses of the first type
are unchanged and the clauses of the second type are satisfied. Since the
restricted formula represents the equivalence of the variables $x_i$ for
$1 \le i \le m$, the formula $\psi''$ contains at least $m$ clauses
of the first type.

Assume for a contradiction that there is an implicate $D \in \Pi$, such
that $\psi''$ does not contain any of the clauses $\neg x_i \vee D$.
Let $\alpha$ be the partial assignment $\neg D$. Clearly, $\psi \wedge \alpha$
implies $\neg x_1$. On the other hand, for each clause $\neg x_i \vee C$ in $\psi''$
at least one literal in $C$ is not falsified by $\alpha$, since $C$ and
$D$ are different prime implicates of $\phi$. Since the clause also contains
the literal $\neg x_i$, unit propagation does not derive any additional
literal from $\neg x_i \vee C$ and $\alpha$. This implies that $\alpha$ is
closed under unit propagation in the formula $\psi''$ and does not contain
$\neg x_1$. This is a contradiction. It follows that $\phi''$ contains at
least $p$ clauses of the second type and, hence, has size at least $p+m$.
\end{proof}

For every $m \ge 3$, let $\psi_m$ be the Horn formula of $3m+1$ variables
consisting of the clauses
$$
\begin{array}{ll}
  \neg x_i \vee \neg y_i \vee z_i & i = 1,\dots,m-1\\
  \neg x_m \vee \neg z_1 \vee \dots \vee \neg z_{m-1}\\
  \neg x_i \vee x_{i+1} & i = 1,\dots,m-1\\
  \neg x_m \vee x_1
\end{array}
$$

\begin{proposition} \label{prop:lb-PC}
Formula $\psi_m$ has $m2^{m-1} + O(m^2)$ prime implicates and its smallest
PC representation has size $2^{m-1} + O(m)$.
\end{proposition}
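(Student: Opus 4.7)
The plan is to apply Lemma~\ref{lem:lb-cycle} directly. Writing $C_i = \neg y_i \vee z_i$ for $i \in \{1, \dots, m-1\}$ and $C_m = \neg z_1 \vee \cdots \vee \neg z_{m-1}$, the conjunction $\phi := \bigwedge_{i=1}^m C_i$ is a Horn formula on the variables $y_1, \dots, y_{m-1}, z_1, \dots, z_{m-1}$, satisfiable by the all-zeros assignment. With this identification, $\psi_m$ is exactly the formula produced in Lemma~\ref{lem:lb-cycle} from $\phi$, so it suffices to determine the number $p$ of prime implicates of $\phi$ and then substitute into the formulas $mp + m(m-1)$ and $p+m$ supplied by the lemma.

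The next step is to enumerate the prime implicates of $\phi$ by exploiting its Horn structure: every prime implicate of a Horn function is either definite Horn (exactly one positive literal) or purely negative. For the definite Horn implicates, the only forward-propagation relations imposed by $\phi$ are the implications $y_i \to z_i$, since the single negative clause of $\phi$ carries no positive information; hence the definite Horn prime implicates are precisely the $m-1$ clauses $\neg y_i \vee z_i$. For the purely negative implicates, a set $T$ of variables is infeasible in $\phi$ if and only if setting every variable in $T$ to true forces all of $z_1, \dots, z_{m-1}$ to be true, which happens exactly when $T$ meets each pair $\{y_i, z_i\}$. The inclusion-minimal such sets are in bijection with subsets $S \subseteq \{1, \dots, m-1\}$ via $T_S = \{y_i : i \in S\} \cup \{z_i : i \notin S\}$, yielding $2^{m-1}$ negative prime implicates $C_S = \bigvee_{i \in S} \neg y_i \vee \bigvee_{i \notin S} \neg z_i$. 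Therefore $p = 2^{m-1} + (m-1)$.

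Plugging into Lemma~\ref{lem:lb-cycle}, the number of prime implicates of $\psi_m$ is
$mp + m(m-1) = m \cdot 2^{m-1} + 2m(m-1) = m \cdot 2^{m-1} + O(m^2)$,
and the smallest PC representation has size $p + m = 2^{m-1} + (2m-1) = 2^{m-1} + O(m)$, as required.

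I expect the main technical step to be verifying the enumeration of prime implicates of $\phi$. For the definite Horn side this amounts to checking that no implication $\bigwedge_{i \in I} y_i \wedge \bigwedge_{j \in J} z_j \to v$ with a positive consequent $v \notin \{z_i : i \in I\}$ is forced by $\phi$, which follows from exhibiting an assignment that sends the antecedent variables to their required values while keeping $v$ false (and keeping enough $z_i$'s false to satisfy $C_m$). For the negative side I would confirm primality of each $C_S$ by showing that removing any one literal $\ell$ yields an unforced clause: the falsifying assignment sets the remaining negated variables of $C_S$ to true (and their associated $y$-partners if necessary), sets the variable indexed by $\ell$ and its partner to $0$, which keeps at least one $z_j$ false, satisfies every $y_i \to z_i$, and hence satisfies $\phi$.
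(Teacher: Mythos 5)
Your proposal is correct and follows essentially the same route as the paper: identify $\phi$ as $\bigwedge_{i=1}^{m-1}(\neg y_i \vee z_i) \wedge (\neg z_1 \vee \dots \vee \neg z_{m-1})$, count its prime implicates as $2^{m-1}+m-1$, and plug into Lemma~\ref{lem:lb-cycle}; you even supply the enumeration of the prime implicates of $\phi$, which the paper merely asserts (note only that your count, like the paper's, uses $m \ge 3$, e.g.\ to ensure $\neg y_i$ is not an implicate, which holds since $\psi_m$ is defined for $m \ge 3$).
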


\begin{proof}
The formula $\psi$ has the form from Lemma~\ref{lem:lb-cycle}, if $\phi$ is
$$
\begin{array}{ll}
  \neg y_i \vee z_i & i = 1,\dots,m-1\\
  \neg z_1 \vee \dots \vee \neg z_{m-1}
\end{array}
$$
Since $m \ge 3$, formula $\phi$ has $2^{m-1} + m - 1$ prime implicates. Hence,
by the previous lemma, the number of the prime implicates of $\psi$
is $m(2^{m-1} + (m - 1)) + m(m - 1)$ and the smallest size of a PC formula
equivalent to $\psi$ is $2^{m-1} + 2m - 1$.
\end{proof}

\begin{lemma} \mlabel{lem:enc-horn}
There is a PC encoding of the function represented by $\psi_m$ of
size linear in the number of the variables.
\end{lemma}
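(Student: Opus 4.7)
The plan is to introduce auxiliary variables $a_0, a_1, \ldots, a_{m-1}$ intended to compute the conjunction $x_1 \wedge \bigwedge_{j=1}^i (y_j \vee z_j)$ along a chain, and to replace the long clause $\neg x_m \vee \neg z_1 \vee \cdots \vee \neg z_{m-1}$ by the unit constraint $\neg a_{m-1}$. Concretely, the proposed encoding $\xi_m$ keeps the equivalence cycle $\neg x_i \vee x_{i+1}$ (and $\neg x_m \vee x_1$) and the Horn clauses $\neg x_i \vee \neg y_i \vee z_i$ from $\psi_m$, adds the two clauses expressing $a_0 \leftrightarrow x_1$, the four clauses per step $i = 1, \ldots, m-1$ expressing $a_i \leftrightarrow a_{i-1} \wedge (y_i \vee z_i)$, and the unit clause $\neg a_{m-1}$. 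Both the number of clauses and the total length are $O(m)$.

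To verify that $\xi_m$ encodes the function of $\psi_m$, I note that on satisfying assignments the chain forces $a_{m-1} = x_1 \wedge \bigwedge_i (y_i \vee z_i)$. When $x_1 = 0$, all $a_i$ are $0$ and the remaining clauses of $\xi_m$ are vacuous, matching $\psi_m$. When $x_1 = 1$, the Horn clauses imply $y_i \to z_i$, so $y_i \vee z_i$ is equivalent to $z_i$ and the constraint $\neg a_{m-1}$ becomes equivalent to the long clause $\neg z_1 \vee \cdots \vee \neg z_{m-1}$.

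For propagation completeness, the plan is to show directly that every unit-propagation-closed consistent partial assignment $\alpha$ on all variables extends to a satisfying total assignment. The cycle and the clauses for $a_0 \leftrightarrow x_1$ synchronize $x_1, \ldots, x_m$ and $a_0$, so the case $x_1 = 0$ collapses the chain to zero and leaves the rest vacuous, and the case where $x_1$ is undetermined must also leave $a_0$ undetermined (otherwise unit propagation would force $x_1$), and reduces to the first case by setting $x_1 = 0$. In the main case $x_1 = 1$, Horn propagation gives that every $i$ with $y_i \in \alpha$ also has $z_i \in \alpha$; write $Z_1 = \{i : z_i \in \alpha\}$. Forward chain propagation then places $a_i \in \alpha$ exactly for $i$ in the longest initial segment $\{0, 1, \ldots, k\}$ with $\{1, \ldots, k\} \subseteq Z_1$, while backward propagation from $\neg a_{m-1}$ places $\neg a_j \in \alpha$ along the suffix through indices in $Z_1$. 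Consistency with $\neg a_{m-1}$ forces $k < m-1$, so $k+1 \notin Z_1$ and, since $\{i : y_i \in \alpha\} \subseteq Z_1$, also $y_{k+1} \notin \alpha$; setting $y_{k+1} = z_{k+1} = 0$, $a_i = 1$ for $i \le k$, $a_i = 0$ for $i > k$, and filling in the remaining $y_i$ by default $0$ and $z_i$ by any value consistent with $\alpha$ places the required drop in the chain and yields a total satisfying assignment.

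The main obstacle is the bookkeeping in the $x_1 = 1$ case: showing that the set $F$ of indices with $a_i \in \alpha$ and the set $B$ of indices with $\neg a_i \in \alpha$, the latter generated by backward propagation from $\neg a_{m-1}$ and from each $j$ with $\neg y_j, \neg z_j \in \alpha$ via the clause $\neg a_j \vee y_j \vee z_j$, are disjoint whenever $\alpha$ is consistent, and that the proposed extension simultaneously satisfies all four chain clauses at every index together with the Horn constraints. Once this is checked, the lemma follows.
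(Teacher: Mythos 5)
Your proposed encoding $\xi_m$ is not propagation complete, so the proof cannot go through along these lines. Take any $m\ge 3$, any index $1\le i\le m-2$, and the partial assignment $\alpha=\{a_i\}$. Every model of $\xi_m$ with $a_i=1$ has $a_0=1$, hence $x_1=\dots=x_m=1$, and also $y_i\vee z_i=1$ (from the clause $\neg a_i\vee y_i\vee z_i$ of the chain definition); together with the retained Horn clause $\neg x_i\vee\neg y_i\vee z_i$ this forces $z_i=1$, so $\xi_m\wedge a_i\models z_i$. However, unit propagation from $a_i$ derives only $a_{i-1},\dots,a_0$, $x_1,\dots,x_m$, $\neg a_{m-1}$ (and $\neg y_{m-1},\neg z_{m-1}$ when $i=m-2$); the two relevant clauses are reduced to the binary clauses $y_i\vee z_i$ and $\neg y_i\vee z_i$, which unit propagation cannot combine, and no contradiction can be derived because $\xi_m\wedge a_i$ is satisfiable (all $x$'s true, $z_j=1,y_j=0$ for $j\le i$, $y_j=z_j=0$ for $j>i$, the $a$'s set accordingly). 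Hence $\xi_m\wedge a_i\models z_i$ while $\xi_m\wedge a_i\not\vdash_1 z_i$ and $\xi_m\wedge a_i\not\vdash_1\bot$, violating Definition~\ref{def:urc-pc}. The root of the problem is that $a_i$ is defined through the disjunction $y_i\vee z_i$, whose interaction with the Horn implication $y_i\to z_i$ (active once the $x$'s are forced) is invisible to unit propagation.

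There is also a gap in the criterion you verify: showing that every unit-propagation-closed consistent partial assignment extends to a satisfying total assignment is exactly the URC property, not propagation completeness. PC additionally requires (see the remark after Definition~\ref{def:urc-pc}) that for every such $\alpha$ and every literal $l$ whose negation is not derived, $\xi_m\wedge\alpha\wedge l$ is satisfiable; the counterexample above is precisely an instance where your criterion holds ($\xi_m\wedge a_i$ is satisfiable) but the stronger condition fails for $l=\neg z_i$. (Your extension recipe also ignores positive literals $a_j\in\alpha$ with $j$ beyond the initial segment determined by $Z_1$, so in such cases it does not even extend $\alpha$.) The lemma itself is true; the paper proves it differently, by observing that the function of $\psi_m$ has a constant-width OBDD under the order $x_1,y_1,z_1,\dots,x_m$ and invoking the CompletePath construction of~\cite{AGMES16}, which yields a linear-size PC encoding without any hand verification of propagation completeness. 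If you want a direct construction, an auxiliary chain tracking the conjunctions $x_1\wedge z_1\wedge\dots\wedge z_i$ (or at least adding the implicates $\neg a_i\vee z_i$ to your chain) avoids the failure exhibited above, but you would then still need to verify the correct PC criterion for the resulting formula.
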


\begin{proof}
Consider the order of the variables given as
$x_1, y_1, z_1, \ldots, x_{m-1}, y_{m-1}, z_{m-1}, x_m$.
One can verify that the standard construction of a Decision-DNNF using this
order of the variables in all paths leads to an OBDD of constant width.
This OBDD satisfies the assumption for the construction of the PC encoding
CompletePath described in~\cite{AGMES16}.
Since the OBDD has constant width, the encoding has size linear in the number of the
variables even if the exactly-one constraint used in CompletePath is represented by
a formula of quadratic size without auxiliary variables.
\end{proof}

\begin{proof}[Proof of Theorem~\ref{thm:lb-horn}]
Let $m=\max(n,3)$. By construction, $\psi_m$ is a Horn formula.
The condition~\ref{enum:lb-horn:a} follows from Proposition~\ref{prop:lb-PC}
and condition~\ref{enum:lb-horn:b} follows from Lemma~\ref{lem:enc-horn}.
\end{proof}

\subsection{Lower Bound on the Size of a URC Formula} 
\label{sec:lb-URC}

In this section we shall proof Theorem~\ref{thm:lb-q-horn}.
Given a natural number \(n\), define a formula \(\psi_n(\vek{x}, \vek{a}, \vek{b})\)
where \(\vek{x}=(x_1, \dots, x_n)\), \(\vek{a}=(a_1, \dots, a_n)\), and
\(\vek{b}=(b_1, \dots, b_n)\) as follows
\begin{align*}
   \psi_n=&\bigwedge_{i=1}^{n-1}(\neg a_i\vee \neg x_i\vee x_{i+1})(\neg a_i\vee x_i\vee \neg x_{i+1})%
(\neg b_i\vee \neg x_i\vee x_{i+1})(\neg b_i\vee x_i\vee \neg x_{i+1})\\
   &\land(\neg a_n\vee \neg x_1\vee \neg x_n)(\neg a_n\vee x_1\vee x_n)
   (\neg b_n\vee \neg x_1\vee \neg x_n)(\neg b_n\vee x_1\vee x_n)\text{.}
\end{align*}
We can also write \(\psi_n\) more concisely as
\begin{equation} \label{eq:psi-concise}
   \psi_n\equiv\bigwedge_{i=1}^{n-1}\left[(a_i\lor b_i)\Rightarrow (x_i\Leftrightarrow x_{i+1})\right]
   \land\left[(a_n\lor b_n)\Rightarrow (x_1\Leftrightarrow \neg x_n)\right]\,.
\end{equation}
Observe that \(\psi_n\) is not URC, because
\(\psi_n\land\bigwedge_{i=1}^n a_i\models\bot\), however
\(\psi_n\land\bigwedge_{i=1}^n a_i\not\vdash_1\bot\). On the
other hand, one can verify that \(\psi_n\) is q-Horn using the valuation
\(\gamma(x_i)=\gamma(\neg x_i)=\frac{1}{2}\),
\(\gamma(a_i)=\gamma(b_i)=0\), and \(\gamma(\neg a_i)=\gamma(\neg
   b_i)=1\) for all \(i=1, \dots, n\).
Main part of Theorem~\ref{thm:lb-q-horn} is the following lower bound.

\begin{proposition}
   \mlabel{prop:q-horn-urc-lb}
   If \(\varphi\) is a URC formula equivalent to \(\psi_n\),
   then \(|\varphi|\geq 2^n\).
\end{proposition}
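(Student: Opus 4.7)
I plan to exhibit, for each $S\subseteq\{1,\dots,n\}$, a distinct clause of~$\varphi$, which yields $|\varphi|\ge 2^n$. For each $S$, let $\alpha_S := \{a_i : i\in S\}\cup\{b_i : i\notin S\}$, a partial assignment of size~$n$. Using the concise form~\eqref{eq:psi-concise}, $\psi_n\wedge\alpha_S$ enforces the cyclic chain $x_1\Leftrightarrow x_2\Leftrightarrow\cdots\Leftrightarrow x_n\Leftrightarrow\neg x_1$, hence $\psi_n\wedge\alpha_S\models\bot$. Removing any single literal from $\alpha_S$ deactivates one edge of the cycle and leaves a satisfiable path, so each $\alpha_S$ is a \emph{minimally} unsatisfiable partial assignment, and the clause $I_S := \bigvee_{i\in S}\neg a_i \vee \bigvee_{i\notin S}\neg b_i$ is a prime implicate of~$\psi_n$. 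Moreover, $I_S$ is the unique prime implicate of $\psi_n$ whose literals all lie in $\overline{\alpha_S}$, since any implicate on $\vek{a},\vek{b}$ only must contain at least one negative literal per position in order to force every cycle edge active.

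\textbf{Extracting a clause per~$S$.} Because $\varphi\equiv\psi_n$ is URC, $\varphi\wedge\alpha_S\vdash_1\bot$ for every~$S$. Fix a unit resolution refutation of $\varphi\wedge\alpha_S$. By the minimality of~$\alpha_S$, for each position~$i$ the refutation must resolve against the $i$-th literal of~$\alpha_S$: otherwise the same sequence of unit resolutions would refute $\varphi$ together with $\alpha_S$ with its $i$-th literal removed, contradicting the satisfiability of the latter. Hence the clauses used in the refutation collectively contain, for each~$i$, the literal $\neg a_i$ (if $i\in S$) or $\neg b_i$ (if $i\notin S$). From this per-position necessity I aim to extract a single clause $C_S\in\varphi$ whose $(\vek{a},\vek{b})$-literals equal exactly $\overline{\alpha_S}$. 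Granted such a~$C_S$, any two distinct $S\ne S'$ differ at some position~$i$, so the $(\vek{a},\vek{b})$-literals of $C_S$ and $C_{S'}$ disagree at that position, and hence $C_S\ne C_{S'}$; this gives $|\varphi|\ge 2^n$ and proves the proposition.

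\textbf{Main obstacle.} The hard part is promoting the collective per-position necessity into the existence of a single clause carrying the full signature~$\overline{\alpha_S}$. A unit resolution refutation may derive intermediate literals on~$\vek{x}$ or on the free $\vek{a},\vek{b}$ variables, so the clause that ultimately becomes empty need not, by itself, contain all of~$\overline{\alpha_S}$. I expect to handle this either by a structural rearrangement of the refutation---showing that any propagated $\vek{x}$-literal can be traced back to clauses whose $(\vek{a},\vek{b})$-signatures are contained in~$\overline{\alpha_S}$, so that the accumulated signature of the closing clause coincides with~$\overline{\alpha_S}$---or by a pigeonhole/swapping argument: if one clause of~$\varphi$ played this role for two distinct $S\ne S'$, splicing the two refutations at that clause would yield a unit resolution refutation of~$\varphi$ together with a strict subset of some~$\alpha_S$, contradicting the minimality established above.
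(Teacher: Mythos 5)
There is a genuine gap, and you have in fact flagged it yourself: the entire proof hinges on converting the \emph{collective} per-position necessity (each unit $a_i$ or $b_i$ of $\alpha_S$ must be resolved against somewhere in the refutation) into the existence of a \emph{single} clause $C_S\in\varphi$ whose $(\vek{a},\vek{b})$-literals are exactly $\overline{\alpha_S}$, and this step is only described as an expectation with two speculative strategies, neither of which is carried out. The difficulty is real: since $\varphi$ is an arbitrary equivalent URC formula, its clauses need not be prime implicates; they may contain positive literals on $\vek{a},\vek{b}$ (falsifiable by derived negative units) and single $\vek{x}$-literals (as weakenings of the clauses $I_T$), so the clause that eventually closes the refutation need not carry the full signature, and a clause whose signature is close to $\overline{\alpha_S}$ can in principle participate in refutations for several neighbouring sets $S$, which undermines the injectivity of $S\mapsto C_S$ that your counting needs. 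The proposed ``splicing'' of two refutations at a shared clause is also not a well-defined operation for unit resolution, because every derived unit depends on the whole assignment, not on the single clause at which you would splice; so neither fallback, as stated, closes the gap.

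The paper avoids this problem by two moves you do not make. First, it assumes without loss of generality that $\varphi$ is a \emph{prime} formula (replacing each clause by a prime implicate it contains preserves equivalence and unit refutation completeness and does not increase the number of clauses), so only prime implicates need to be analysed. Second, it uses the symmetry of $\psi_n$ under simultaneously negating all variables $\vek{x}$ to show that every prime implicate containing a literal of $\lit{\vek{x}}$ contains at least two such literals; consequently, under $\alpha_S$ unit propagation from prime clauses other than $I_S$ can only produce negative literals on $\vek{a}\cup\vek{b}$, these never falsify further literals of the clauses in $\overline{U}$ nor trigger any clause with $\vek{x}$-literals, and hence no contradiction is reachable unless $I_S$ itself belongs to $\varphi$. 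This yields all $2^n$ clauses $I_S$ directly, with no need for the exact-signature extraction you are missing. Your choice of the assignments $\alpha_S$, their minimal unsatisfiability, and the identification of the prime implicates $I_S$ all match the paper, but to complete your argument you would need either the reduction to prime formulas plus the symmetry lemma, or a genuinely new structural analysis of refutations in non-prime formulas, which the proposal does not supply.
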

\begin{proof}
Without loss of generality, we can assume that $\phi$ is a prime formula.
Since all the ocurrences of the variables $\vek{x}$ in~\eqref{eq:psi-concise}
are involved in equivalences and non-equivalences, the set of satisfying
assignments of $\psi_n$ is invariant under taking the negation of all
the variables $\vek{x}$ simultaneously. This implies the following. Assume,
a prime implicate $C$ of $\phi_n$ contains precisely one literal $l \in \lit{\vek{x}}$.
Since $C$ is prime, there is a model of the formula, such that $l$ is the only
satisfied literal in $C$. This is not possible, since negating all $\vek{x}$ variables in the model
would lead to an assignment that is not a model of $\phi_n$. It follows that every
prime implicate of $\psi_n$ containing a literal from $\lit{\vek{x}}$ contains
at least two such literals.

Let \(U\) denote the set of partial assignments \(\alpha\cup \beta\)
where \(\alpha\subseteq \vek{a}\), \(\beta\subseteq \vek{b}\) and for
every index \(i=1, \dots, n\) exactly one of the variables
\(a_i\) and \(b_i\) belongs to \(\alpha\cup \beta\). In particular
\(|\alpha\cup \beta|=n\) and \(|U|=2^n\). Clearly, for every
$\alpha \cup \beta \in U$, the formula $\psi_n \wedge \alpha \wedge \beta$
is inconsistent. It follows that the clauses in the set
$\overline{U} = \{\neg(\alpha \wedge \beta) \mid \alpha\cup \beta \in U\}$
are implicates of $\psi_n$. Moreover, one can verify that $\overline{U}$
is precisely the set of all prime implicates
of $\psi_n$ containing only the literals from $\lit{\vek{a} \cup \vek{b}}$.
Let us prove that $\phi$ contains all of the implicates in $\overline{U}$.

Assume a partial assignment $\alpha \cup \beta \in U$ and the clause
$C = \neg(\alpha \wedge \beta)$. Assume for a contradiction that $C$
is not in $\phi$. Since $\phi \wedge \alpha \wedge \beta$ is inconsistent
and $\phi$ is URC, we have $\phi \wedge \alpha \wedge \beta \vdash_1 \bot$.
Using the argument from the first paragraph of the proof, this derivation does
not use any of the prime implicates containing a literal from $\lit{\vek{x}}$.
However, unit propagation using implicates from $\overline{U} \setminus \{C\}$
derives only negative literals on the variables $\vek{a} \cup \vek{b}$ and
these literals cannot be used to continue unit propagation using clauses
from $\overline{U} \setminus \{C\}$. This contradicts the assumption
$\phi \wedge \alpha \wedge \beta \vdash_1 \bot$. It follows that $C$ is
in $\phi$ as required.
\end{proof}

\begin{lemma} \mlabel{lem:enc-q-horn}
There is a PC encoding of the function represented by $\psi_n$ of
size linear in the number of the variables.
\end{lemma}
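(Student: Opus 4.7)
The plan is to mirror the approach of Lemma~\ref{lem:enc-horn}: I would build an ordered binary decision diagram for the function represented by $\psi_n$ whose width is bounded by a constant, and then apply the CompletePath construction from~\cite{AGMES16} to translate it into a PC encoding. Since CompletePath produces per node a bounded number of clauses together with an exactly-one constraint among the outgoing arcs (which costs $O(w^2)$ clauses in width $w$ when expressed without auxiliary variables), a constant-width OBDD yields an encoding of size linear in the number of nodes, and hence linear in the number of input variables.

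The variable ordering I would use walks around the cycle in parallel with the activators:
$$x_1,\, a_1,\, b_1,\, x_2,\, a_2,\, b_2,\, \ldots,\, x_n,\, a_n,\, b_n.$$
The key observation is that the only long-range coupling in $\psi_n$ is the closing constraint $(a_n \lor b_n) \Rightarrow (x_1 \Leftrightarrow \neg x_n)$, which requires us to retain just the single bit $x_1$ throughout the computation. All other constraints $(a_i \lor b_i) \Rightarrow (x_i \Leftrightarrow x_{i+1})$ are local, involving only consecutive $x$-variables, so after reading the block $x_i, a_i, b_i$ it suffices to remember the pair $(x_i, c_i)$ with $c_i = a_i \lor b_i$ in order to check consistency with the upcoming $x_{i+1}$ and then discard $x_i$ and $c_i$. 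The OBDD state at the end of block $i$ is thus described by a triple $(x_1, x_i, c_i)$, giving at most $8$ live nodes per layer plus possibly the $0$-leaf, and the intermediate states within a block are no wider. Consequently the OBDD has size $O(n)$.

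With this in hand, the final step follows the pattern of Lemma~\ref{lem:enc-horn}: CompletePath from~\cite{AGMES16} produces a PC encoding of size linear in the size of the OBDD, even when the exactly-one constraints at each node are realised by a quadratic formula without auxiliary variables, because the width is constant. The main point that needs careful justification is the constant-width claim --- specifically, that the values $x_j$ and $c_j$ for $j < i$ may be safely forgotten once block $i$ has been processed. This is precisely where the locality of the equivalences $x_i \Leftrightarrow x_{i+1}$ is used; the rest of the argument is routine.
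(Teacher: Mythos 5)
Your proposal is correct and follows essentially the same route as the paper: the paper's proof simply invokes the construction of Lemma~\ref{lem:enc-horn} (constant-width OBDD plus the CompletePath encoding of~\cite{AGMES16}) with exactly the ordering $x_1, a_1, b_1, \ldots, x_n, a_n, b_n$ you chose. Your explicit state argument (remembering $x_1$, $x_i$, and $c_i = a_i \lor b_i$) just fills in the constant-width verification the paper leaves to the reader, and it is sound.
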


\begin{proof}
Use the same approach as in the proof of Lemma~\ref{lem:enc-horn} with the
order of the variables given as $x_1, a_1, b_1, x_2, a_2, b_2, \dots, x_n, a_n, b_n$.
\end{proof}

\begin{proof}[Proof of Theorem~\ref{thm:lb-q-horn}]
Consequence of Proposition~\ref{prop:q-horn-urc-lb} and
Lemma~\ref{lem:enc-q-horn}.
\end{proof}

The size of the encoding guaranteed by Lemma~\ref{lem:enc-q-horn} is
$Cn + O(1)$ with a relatively large constant $C$.
Let us note that a PC encoding of the function represented by
\(\psi_n\) of size $4n + O(1)$ can be described as follows. We use
additional auxiliary variables \(\vek{c}=(c_1, \dots, c_n)\) and
define
\begin{align*}
   \psi_n'=
   & \bigwedge_{i=1}^n (\neg a_i \lor c_i)(\neg b_i \lor c_i)\\
   & \land(\neg c_1\lor\dots\lor \neg c_n) \\
   & \land\bigwedge_{i=1}^{n-1}(\neg c_i\vee \neg x_i\vee x_{i+1})(\neg c_i\vee x_i\vee \neg x_{i+1})\\%
   &\land(\neg c_n\vee \neg x_1\vee \neg x_n)(\neg c_n\vee x_1\vee x_n)\text{.}
\end{align*}
One can check \(\psi_n'(\vek{x}, \vek{a}, \vek{b}, \vek{c})\) is a PC
encoding of the function represented by \(\psi_n(\vek{x}, \vek{a},
   \vek{b})\). We omit the proof because it is rather technical and
Lemma~\ref{lem:enc-q-horn} is sufficient for
the proof of Theorem~\ref{thm:lb-q-horn} above.


\section{Irredundant PC and URC Formulas}
\label{sec:irred-PC-URC}

In this section we prove the results formulated in Section~\ref{sub:irred-PC}.

\subsection{Irredundant PC Formulas}
\label{sub:proof-irred-PC}

\begin{proof}[Proof of Theorem~\ref{thm:PC-irredundant}]
If $C \in \phi_1$ and $l \in C$, then
$\phi_2 \wedge \neg (C \setminus \{l\}) \vdash_1 l$ or
$\phi_2 \wedge \neg (C \setminus \{l\}) \vdash_1 \bot$,
since $\phi_2$ is a PC formula
and $C$ is its implicate. Let $\theta_{C,l}$ be a set of clauses of $\phi_2$
used in some of these derivations. Since the derivation is unit resolution
and each clause is used to derive a literal on a different variable,
we have $|\theta_{C,l}| \le n$. Moreover, we have $\theta_{C,l} \models C$.
Let $\phi_2'$ be the union of $\theta_{C,l}$
for all $C \in \phi_1$ and $l \in C$. Clearly, $\phi_2'$ is a subset of
$\phi_2$ and $|\phi_2'| \le n^2 |\phi_1|$.
It remains to verify that $\phi_2'$ is a PC formula equivalent to $\phi_2$.
This implies that $\phi_2'= \phi_2$, since $\phi_2$ is PC-irredundant,
and the statement follows.

Formula $\phi_2'$ is implied by $\phi_2$ and implies $\phi_1$. Together,
this implies that $\phi_2'$ is equivalent to
both $\phi_1$ and $\phi_2$. Consider the formula $\phi_2' \wedge \phi_1$.
Since it contains $\phi_1$, it is a PC formula. By construction of $\phi_2'$,
each clause of $\phi_1$ is absorbed by $\phi_2'$, see \cite{BM12} for more
detail. This implies that we can successively remove all clauses of $\phi_1$
from $\phi_2' \wedge \phi_1$ while keeping its PC property.
\end{proof}

\subsection{A Large Irredundant URC Formula}
\label{sec:lb-irred-URC}

We call a formula URC-irredundant, if removing any clause
either changes the represented function or leads to
a formula that is not URC. In this section, we present
an example of a formula that can be extended by additional
clauses to a PC formula
of size linear in the number of the variables and has
also a URC-irredundant extension of size exponential
in the number of the variables.

Let $m \ge 2$. Consider the variables $a_i$, $b_i$, $c_i$, $d_i$,
the definite Horn formulas
$$
\delta_i = (\neg a_i \vee b_i) \wedge (\neg a_i \vee c_i)
\wedge (\neg b_i \vee \neg c_i \vee d_i)
$$
for $i=1,\ldots,m$, and the formulas
\begin{align*}
\gamma_m &= \left(\bigvee_{i=1}^m a_i\right) \wedge
\bigwedge_{i=1}^m \delta_i\\
\gamma'_m &= \gamma_m \wedge \bigwedge_{i=1}^m (\neg a_i \vee d_i)\\
\gamma''_m &= \gamma_m \wedge \bigwedge_{I \in E_m}
\left(\bigvee_{i \not\in I} a_i \vee \bigvee_{i \in I} d_i\right)
\end{align*}
where $E_m$ is the system of all non-empty subsets
$I \subseteq \{1,\ldots,m\}$ such that $|I|$ is even.
Clearly, we have
\begin{eqnarray*}
|\gamma_m|   & = & 3m + 1\\
|\gamma'_m|  & = & 4m + 1\\
|\gamma''_m| & = & 3m + 2^{m-1}
\end{eqnarray*}

\begin{lemma} \mlabel{lem:gamma-primes-equivalent}
$\gamma'_m$ and $\gamma''_m$ are equivalent to $\gamma_m$.
\end{lemma}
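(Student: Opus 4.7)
The plan is to establish two inclusions for each equivalence. Since $\gamma'_m$ and $\gamma''_m$ are both obtained from $\gamma_m$ by adding clauses, the directions $\gamma'_m \models \gamma_m$ and $\gamma''_m \models \gamma_m$ are immediate. The work is in showing that the added clauses are logical consequences of $\gamma_m$ itself.

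First I would show that $\gamma_m \models \neg a_i \vee d_i$ for each $i = 1,\dots,m$. The subformula $\delta_i$ encodes the chain of implications $a_i \Rightarrow b_i$, $a_i \Rightarrow c_i$, and $b_i \wedge c_i \Rightarrow d_i$, which together yield $a_i \Rightarrow d_i$; equivalently, the clause $\neg a_i \vee d_i$ is derivable from $\delta_i$ by two successive resolution steps. This is enough to conclude $\gamma_m \equiv \gamma'_m$.

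Next, for an arbitrary non-empty $I \subseteq \{1,\dots,m\}$ (the parity restriction plays no role for mere logical equivalence), I would argue that
\[
\gamma_m \models \bigvee_{i \notin I} a_i \vee \bigvee_{i \in I} d_i
\]
by a direct semantic argument: suppose an assignment falsifies the right-hand side, so $a_i = 0$ for all $i \notin I$ and $d_i = 0$ for all $i \in I$. The clause $\bigvee_{i=1}^m a_i \in \gamma_m$ forces some $a_j = 1$, and then $j \in I$. But from the previous paragraph $\gamma_m$ implies $a_j \Rightarrow d_j$, so $d_j = 1$, contradicting $d_j = 0$. Hence every such clause is implied by $\gamma_m$, which gives $\gamma_m \equiv \gamma''_m$.

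There is no real obstacle here; the argument is a short propagation chain through each $\delta_i$ together with the covering clause $\bigvee_i a_i$. The only thing worth flagging explicitly is that the restriction to even $|I|$ in $E_m$ is not needed for equivalence; it will presumably be relevant only later, for establishing URC-irredundancy of $\gamma''_m$.
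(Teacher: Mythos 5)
Your proof is correct and takes essentially the same approach as the paper: derive $\neg a_i \vee d_i$ from $\delta_i$, then obtain each clause $\bigvee_{i \notin I} a_i \vee \bigvee_{i \in I} d_i$ from the covering clause $\bigvee_{i=1}^m a_i$ together with these implications (the paper phrases this second step as resolution, which is the same argument read syntactically). The paper also makes your final observation explicitly, noting the derivation works for every non-empty $I$, not only $I \in E_m$.
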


\begin{proof}
For every $i=1,\ldots,m$, the clause $(\neg a_i \vee d_i)$
can be obtained by resolution from the clause
$(\neg b_i \vee \neg c_i \vee d_i)$ and the clauses
$(\neg a_i \vee b_i)$, $(\neg a_i \vee c_i)$. This implies
that $\gamma'_m$ is equivalent to $\gamma_m$.

For every $I \subseteq \{1,\ldots,m\}$, not only if $I \in E_m$,
the clause
$$
\bigvee_{i \not\in I} a_i \vee \bigvee_{i \in I} d_i
$$
can be obtained by resolution from the clause $\bigvee_{i=1}^m a_i$
and the clauses $\neg a_i \vee d_i$ for $i \in I$.
This implies that $\gamma''_m$ is equivalent to $\gamma_m$.
\end{proof}

\begin{lemma} \mlabel{lem:gamma-prime-PC}
$\gamma'_m$ is a PC formula.
\end{lemma}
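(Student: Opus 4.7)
The plan is to use the equivalent characterization of PC from the paper: for every partial assignment $\alpha$ closed under unit propagation in $\gamma'_m$ without deriving $\bot$, and every literal $l$ with $l, \neg l \notin \alpha$, the formula $\gamma'_m \wedge \alpha \wedge l$ is satisfiable. I fix such an $\alpha$ and classify its structure. Let $S = \{i : a_i \in \alpha\}$, $T = \{i : \neg a_i \in \alpha\}$, and $U = \{1,\dots,m\} \setminus (S \cup T)$. Non-bot UP-closure forces $|T| \le m-1$ with equality implying $|S| = 1$ and $|U|=0$ (the big clause propagates $a_{j^*}$ for the unique remaining index); for every $i \in S$, the clauses $\neg a_i \vee b_i$, $\neg a_i \vee c_i$, $\neg a_i \vee d_i$ propagate $b_i, c_i, d_i \in \alpha$; for every $i \in U$, none of $\neg b_i, \neg c_i, \neg d_i$ can lie in $\alpha$, since each would propagate $\neg a_i$ via one of the three definite Horn implications and move $i$ into $T$; and for every $i$, if $b_i, c_i \in \alpha$ then $d_i \in \alpha$ through the ternary clause.

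Next I handle the new literal $l$ by a case split and check that the UP closure $\alpha''$ of $\alpha \cup \{l\}$ remains non-bot. If $l = a_j$, then $j \in U$, and UP derives exactly $b_j, c_j, d_j = 1$, which is compatible with $\alpha$ by the ``no negative literals on index-$j$ secondary variables'' observation above. If $l = \neg a_j$ with $j \in U$, then $|T| \le m - 2$ by the structural constraints, so $|T \cup \{j\}| \le m-1$; either $|T \cup \{j\}| \le m - 2$ and nothing propagates, or $|T \cup \{j\}| = m - 1$ forcing $S = \emptyset$ and $|U \setminus \{j\}| = 1$, in which case the big clause propagates $a_{j^*} = 1$ for the remaining index, consistently. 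If $l$ is one of $\pm b_j, \pm c_j, \pm d_j$, then $j \notin S$ (otherwise the corresponding positive literal would already be in $\alpha$), and a short check shows that any propagation triggered by $l$ concerns only the ternary clause $\neg b_j \vee \neg c_j \vee d_j$ or the implication $\neg a_j \vee d_j$, and each derived literal is compatible with $\alpha$ precisely because the situations that would yield a contradiction are already ruled out by the UP-closure of $\alpha$.

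Having confirmed that $\alpha''$ contains no $\bot$, I construct a satisfying assignment $A \supseteq \alpha''$ as follows. Let $S', T', U'$ be the new partition induced by $\alpha''$. For $i \in S'$, set $a_i = b_i = c_i = d_i = 1$ (all forced). For $i \in U'$, set $a_i = 0$, and set each of $b_i, c_i, d_i$ to $1$ if it already lies in $\alpha''$ and to $0$ otherwise; since no negative literal on these variables is in $\alpha''$ and $b_i = c_i = 1$ in $\alpha''$ forces $d_i = 1$ in $\alpha''$, this respects the ternary clause. For $i \in T'$, the only remaining constraint on $b_i, c_i, d_i$ is $\neg b_i \vee \neg c_i \vee d_i$, which $\alpha''$ already satisfies or leaves satisfiable (UP-closure excludes the single dangerous configuration $b_i, c_i \in \alpha''$, $\neg d_i \in \alpha''$), so any extension that sets $d_i = 1$ whenever both $b_i, c_i = 1$ works, e.g.\ completing unset literals to $0$ except for $d_i$ in the problematic cell. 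Finally, the big clause $\bigvee_i a_i$ is satisfied: either $S' \neq \emptyset$ directly, or $l = \neg a_j$ triggered the propagation that placed some index into $S'$.

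The main obstacle is purely organizational: enumerating the handful of sub-cases for $l$ and verifying that each propagation step is consistent with $\alpha$. The conceptual point that makes everything go through is the role of the ``redundant'' clauses $\neg a_i \vee d_i$: they guarantee that $\neg d_i$ is always accompanied by $\neg a_i$ in any UP-closed $\alpha$, and thus that for $i \in U$ the variable $d_i$ carries no negative constraint, which is exactly the flexibility needed to extend $\alpha \cup \{l\}$ to a full satisfying assignment.
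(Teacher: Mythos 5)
Your overall route is genuinely different from the paper's: the paper disposes of this lemma in three lines by observing that $\bigvee_{i=1}^m a_i$ and each $\delta_i \wedge (\neg a_i \vee d_i)$ are PC and that conjoining PC formulas sharing at most one variable preserves propagation completeness, whereas you verify the PC characterization directly by classifying UP-closed partial assignments and building a model. That direct route can be made to work, but as written the final step fails. Your satisfying assignment sets $a_i=1$ exactly for $i \in S'$ and $a_i=0$ for all $i \in T' \cup U'$, and you justify the big clause $\bigvee_i a_i$ by claiming $S' \neq \emptyset$ (``either $S'\neq\emptyset$ directly, or $l=\neg a_j$ triggered the propagation that placed some index into $S'$''). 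This is simply not true in general: take $\alpha=\emptyset$ (which is UP-closed, since $\gamma'_m$ has no unit clauses) and $l=b_1$; then $\alpha''=\{b_1\}$, $S'=\emptyset$, and your constructed assignment falsifies $\bigvee_i a_i$, so it is not a model. The repair is easy and uses exactly the flexibility you identified: when $S'=\emptyset$ one has $|T'|\le m-2$ (if $|T'|=m$ the big clause yields $\bot$, and if $|T'|=m-1$ unit propagation would have put the remaining index into $S'$), hence $U'\neq\emptyset$, and you may pick $i_0\in U'$ and set $a_{i_0}=b_{i_0}=c_{i_0}=d_{i_0}=1$, which is consistent because no negative literal on $a_{i_0},b_{i_0},c_{i_0},d_{i_0}$ lies in $\alpha''$. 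Without some such case, the proof is incomplete.

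Two smaller inaccuracies in the case analysis are harmless but worth fixing: in the case $l=\neg a_j$, $|T\cup\{j\}|=m-1$ does not force $S=\emptyset$ (the remaining index may lie in $S$, in which case nothing propagates, which is fine); and in the case $l\in\{\pm b_j,\pm c_j,\pm d_j\}$ the propagation is not confined to the ternary clause and $\neg a_j\vee d_j$ --- for instance $l=\neg b_j$ fires $\neg a_j\vee b_j$, deriving $\neg a_j$, and may then fire the big clause --- though the derived literals are still consistent with $\alpha$ by your structural observations. Compared with the paper's compositional argument, your approach is considerably more laborious but self-contained; the paper's proof buys brevity at the cost of invoking the (unproved in situ, but standard) fact that PC is preserved under conjunction of formulas with at most one shared variable.
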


\begin{proof}
The disjunction $\bigvee_{i=1}^m a_i$ is a PC formula and also $\delta_i'=\delta_i\land(\neg a_i\lor d_i)$
are PC formulas. The formula $\gamma'_m$ is obtained from $\bigvee_{i=1}^m a_i$
by combining it sequentially with $\delta_i'$. Since in each step, we combine
formulas which have only one variable in common, the formula obtained in
each step is PC.
\end{proof}

\begin{lemma} \mlabel{lem:gamma-dprime-URC}
$\gamma''_m$ is a URC formula.
\end{lemma}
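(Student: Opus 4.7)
The plan is to argue by contradiction. Suppose $\alpha$ is a partial assignment such that $\gamma''_m \wedge \alpha$ is unsatisfiable while unit propagation fails to derive $\bot$. Let $\alpha^*$ be the closure of $\gamma''_m \wedge \alpha$ under unit propagation, so $\alpha^*$ is a consistent partial assignment closed under UP in $\gamma''_m$. I will extend $\alpha^*$ to a total assignment satisfying $\gamma''_m$, contradicting the assumed unsatisfiability.

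The first step is to partition $\{1,\dots,m\}$ according to the value of $a_i$ in $\alpha^*$ into $A^+$ (set true), $A^-$ (set false), and $A^0$ (unassigned), and to record $D^- = \{i : \neg d_i \in \alpha^*\}$. Since each $\delta_i$ is Horn and $\alpha^*$ is UP-closed, $i \in A^+$ forces $b_i, c_i, d_i \in \alpha^*$, and $i \in A^0$ implies that neither $\neg b_i$ nor $\neg c_i$ lies in $\alpha^*$ (otherwise $\delta_i$ would propagate $\neg a_i$). Given any $i^* \in A^+ \cup (A^0 \setminus D^-)$, the assignment $a_{i^*} = 1$ is consistent with $\alpha^*$ and forces $b_{i^*} = c_{i^*} = d_{i^*} = 1$, which is also consistent by these observations. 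Setting $a_j = 0$ for $j \notin A^+ \cup \{i^*\}$ and completing the values of $b_j, c_j, d_j$ to satisfy each Horn $\delta_j$ (possible by the URC property of $\delta_j$) yields a total assignment in which $\bigvee_i a_i$ holds and every additional clause $\bigvee_{j \notin I} a_j \vee \bigvee_{j \in I} d_j$ is satisfied at $i^*$ (by $a_{i^*}$ if $i^* \notin I$, by $d_{i^*}$ if $i^* \in I$). It therefore suffices to rule out the remaining case $A^+ = \emptyset$ and $A^0 \subseteq D^-$.

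The main obstacle, and where the even-size restriction on $E_m$ is crucial, is to show that this residual case forces a unit-propagation refutation, contradicting consistency or closedness of $\alpha^*$. I plan a parity argument. If $A^0 = \emptyset$, then $\bigvee_i a_i$ is already empty under $\alpha^*$. If $|A^0| \geq 2$ is even, then $I = A^0 \in E_m$ and its additional clause is empty under $\alpha^*$, since every $a_j$ with $j \notin I$ satisfies $j \in A^-$ and every $d_j$ with $j \in I \subseteq D^-$ is falsified. If $|A^0| = 1$, UP on $\bigvee_i a_i$ propagates the single remaining $a_j$, contradicting $A^+ = \emptyset$ in the closure. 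Finally, if $|A^0| \geq 3$ is odd, I pick any $j_0 \in A^0$ and take $I = A^0 \setminus \{j_0\} \in E_m$; the associated clause reduces under $\alpha^*$ to the unit $a_{j_0}$, again contradicting $A^+ = \emptyset$. In each subcase $\alpha^*$ fails to be both consistent and UP-closed, completing the proof.
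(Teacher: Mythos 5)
Your proof is correct and is essentially the paper's argument in contrapositive form: the paper also reduces to the set of indices where unit propagation fails to refute $a_i$ (your $A^0$, with $\neg d_i$ assigned there), uses the clause of $\gamma''_m$ indexed by that set when its size is even, and drops one element to get a unit $a_{j_0}$ when it is odd, while your model construction for the case $A^+ \cup (A^0 \setminus D^-) \neq \emptyset$ mirrors the paper's satisfiability argument showing $\rho \wedge \delta_i \models \neg a_i$ for all $i$. The only cosmetic difference is that you work with the UP-closure $\alpha^*$ and contradict closedness/consistency, whereas the paper directly exhibits the unit-propagation refutation.
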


\begin{proof}
Assume a partial assignment $\rho$, such that
$\gamma''_m \wedge \rho$ is unsatisfiable and, hence,
also $\gamma_m \wedge \rho$ is unsatisfiable.
If there is an index $i \in \{1,\ldots,m\}$, such
that $\delta_i \wedge \rho$ is unsatisfiable, then a contradiction
can be derived by unit propagation, since $\delta_i$ is
a Horn and thus also URC formula. Assume, each of the formulas $\delta_i \wedge \rho$ is
satisfiable.
Assume for a contradiction that there is an index $j$, such that
$\rho \wedge \delta_j \not\models \neg a_j$.
The setting $a_j=1$ satisfies the
clause $\bigvee_{i=1}^m a_i$ in $\gamma_m$ and
we get a contradiction with the unsatisfiability of \(\gamma_m\land\rho\), since the formulas $\delta_i$ are
independent and each of them is satisfiable together with
$\rho \wedge a_j$. It follows that for all
$i=1,\ldots,m$, we have $\rho \wedge \delta_i \models \neg a_i$.
By case inspection, this is equivalent to the assumption that for
every $i=1,\ldots,m$, we have
$\{\neg a_i, \neg b_i, \neg c_i, \neg d_i\} \cap \rho \not= \emptyset$.
Let $J$ be the set of indices $i$, such that
$\delta_i \wedge \rho \not\vdash_1 \neg a_i$.
One can verify that for every $i \in J$, we have $\neg d_i \in \rho$.
This implies that the clause
$$
\bigvee_{i \not\in J} a_i \vee \bigvee_{i \in J} d_i
$$
is falsified by $\rho$ and the literals derived by unit propagation
limited to the subformulas $\delta_i$. If $J=\emptyset$, then this clause
is contained in $\gamma_m$. Otherwise, we distinguish two cases.
If $|J|$ is even, then $J \in E_m$ and the clause is one
of the additional clauses used to construct $\gamma''_m$.

If $|J|$ is odd, choose an arbitrary
$j \in J$ and consider $I=J \setminus \{j\}$.
The only literal in the clause
$$
\bigvee_{i \not\in I} a_i \vee \bigvee_{i \in I} d_i
\in \gamma''_m
$$
that is not falsified by unit propagation using $\rho$
and the formulas $\delta_i$ is $a_j$. Hence, unit propagation
using this clause derives $a_j$. Together with $\delta_j$,
we further derive $d_j$ and, finally, we
derive a contradiction, since $\neg d_j \in \rho$.
\end{proof}

\begin{lemma} \mlabel{lem:gamma-dprime-irred}
Every URC-irredundant subformula of $\gamma''_m$ has
size at least $2^{m-1}$.
\end{lemma}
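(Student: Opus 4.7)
The plan is to show that every one of the $2^{m-1}$ ``big'' clauses of $\gamma''_m$ must appear in any URC-irredundant subformula $\psi \subseteq \gamma''_m$ that represents the function of $\gamma''_m$. Here by \emph{big clauses} I mean the clause $\bigvee_{i=1}^m a_i$ (which is the case $I = \emptyset$) together with the clauses $C_I := \bigvee_{i \notin I} a_i \vee \bigvee_{i \in I} d_i$ for $I \in E_m$; there are exactly $2^{m-1}$ of them, one per even-size subset $I \subseteq \{1,\ldots,m\}$ (including $\emptyset$). Once each $C_I$ is shown to be necessary, we immediately get $|\psi| \ge 2^{m-1}$.

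Fix an even $I \subseteq \{1, \ldots, m\}$ and consider the partial assignment
$$
\rho_I \;=\; \{\neg a_i : i \notin I\} \,\cup\, \{\neg d_i : i \in I\},
$$
which is designed to falsify precisely the clause $C_I$ among the big clauses. First I would verify that $\gamma''_m \wedge \rho_I$ is unsatisfiable: since $\gamma''_m$ is equivalent to $\gamma_m$ by Lemma~\ref{lem:gamma-primes-equivalent}, the clause $\bigvee_i a_i$ forces some $a_j = 1$; the literal $\neg a_j \in \rho_I$ rules out $j \notin I$, while for $j \in I$ the Horn chain $a_j \Rightarrow b_j, c_j \Rightarrow d_j$ in $\delta_j$ contradicts $\neg d_j \in \rho_I$.

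Next I would argue that in $(\gamma''_m \setminus \{C_I\}) \wedge \rho_I$ no unit propagation occurs at all. The easy part is the $\delta_i$ clauses: $\rho_I$ contains only the negative literals $\neg a_i$ and $\neg d_i$, so the only clause of any $\delta_i$ that could become shorter is $(\neg b_i \vee \neg c_i \vee d_i)$ when $i \in I$, and it then needs both $b_i$ and $c_i$ to be true, neither of which is in $\rho_I$. The main step is to show that no other big clause $C_J$ (with $J$ even, $J \neq I$) becomes unit. Here I would observe that a literal of $C_J$ is falsified by $\rho_I$ exactly when it is either $a_i$ with $i \notin J$ and $i \notin I$, or $d_i$ with $i \in J$ and $i \in I$; the remaining (undetermined) literals are $\{a_i : i \in I \setminus J\} \cup \{d_i : i \in J \setminus I\}$, whose total number equals $|I \triangle J|$. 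Since both $|I|$ and $|J|$ are even, so is $|I \triangle J|$, and $J \neq I$ forces $|I \triangle J| \ge 2$; hence $C_J$ keeps at least two undetermined literals and cannot fire.

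The main (but still mild) obstacle is exactly this symmetric-difference parity argument; everything else is routine. Combining the two facts, if some URC subformula $\psi \subseteq \gamma''_m$ equivalent to $\gamma''_m$ omitted $C_I$, then $\psi \wedge \rho_I$ would be unsatisfiable yet $\psi \wedge \rho_I \not\vdash_1 \bot$, contradicting URC\@. Therefore every $C_I$ (one for each of the $2^{m-1}$ even subsets of $\{1,\ldots,m\}$) lies in $\psi$, yielding $|\psi| \ge 2^{m-1}$.
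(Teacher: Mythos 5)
Your proof is correct and follows essentially the same route as the paper: you falsify the chosen big clause via the partial assignment $\neg C_I$, check by case inspection that no $\delta_j$ clause becomes unit or empty, and use the even symmetric difference $|I \triangle J| \ge 2$ to show no other big clause fires, contradicting URC. The only (welcome) cosmetic difference is that you treat the clause $\bigvee_{i=1}^m a_i$ uniformly as the case $I=\emptyset$, which makes the count of $2^{m-1}$ necessary clauses explicit, whereas the paper handles that clause separately.
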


\begin{proof}
Let $I \in E_m$ and let us prove that the formula
$\gamma''_m \setminus \{C\}$, where
$$
C = \bigvee_{i \not\in I} a_i \vee \bigvee_{i \in I} d_i
$$
is not a URC representation of $\gamma_m$.
Consider the partial assignment $\neg C$ and let
us prove that this assignment does not make any
of the clauses of $\gamma''_m \setminus \{C\}$
unit or empty. This can be verified by case inspection for
the clauses of $\delta_j$ for every $j=1,\ldots,m$.
Since $|I| \ge 2$, the clause
$$
\bigvee_{i=1}^m a_i \in \gamma_m
$$
also does not become unit or empty. If $J \in E_m \setminus \{I\}$,
then the number of the literals in the clause
$$
\bigvee_{i \not\in J} a_i \vee \bigvee_{i \in J} d_i
\in \gamma''_m
$$
that are not falsified by $\neg C$ is equal to the size
of the symmetric difference of the sets $I$ and $J$.
Since these sets are different and both have even size,
their symmetric difference is a non-zero even number, so
at least $2$. Altogether, the assignment $\neg C$ is closed
under unit propagation for the formula $\gamma''_m \setminus \{C\}$ and,
hence, unit propagation does not derive a contradiction from
$\neg C \wedge (\gamma''_m \setminus \{C\})$.
Since $C$ is an implicate of $\gamma_m$, this implies that
$\gamma''_m \setminus \{C\}$ is not a URC representation
of $\gamma_m$. This finishes the proof, since
the above argument can be used for $2^{m-1}$ clauses
$C \in \gamma''_m$.
\end{proof}

\begin{proof}[Proof of Proposition~\ref{prop:URC-irredundant}]
For a given $n$, let $m=\max(n, 2)$, $\phi_{n,1}=\gamma'_m$,
and let $\phi_{n,2}$ be any URC-irredundant subformula of $\gamma''_m$.
The required properties follow from
lemmas~\ref{lem:gamma-prime-PC}, \ref{lem:gamma-dprime-URC},
and~\ref{lem:gamma-dprime-irred}.
\end{proof}

Let us note that $\gamma''_m$ is a URC-irredundant formula.
None of the clauses of $\delta_i$ for $i \in \{1,\ldots,m\}$ can
be removed, since this changes the restriction of the function obtained
by setting all the variables $a_j, b_j, c_j, d_j$ for $j \not=i$ to $1$.
The clause $(\bigvee_{i=1}^m a_i)$ also cannot be removed, since
this is the only unsatisfied clause, if all $a_i$ variables are $0$
and all the variables $b_i, c_i, d_i$ are $1$.

\section{Characterization of PC Formulas}
\label{sec:characterize-PC}

We prove a characterization of PC formulas formulated in Section~\ref{sub:characterize-PC}
by relating their implicational dual rail encoding to a Horn function corresponding to the semantic
consequence on the literals implied by the formula.

Let \(\varphi(\vek{x})\) be a formula on a set of variables
\(\vek{x}\) and let \(\alpha\subseteq\lit{\vek{x}}\) be a set of
literals. The \emph{closure of \(\alpha\) under unit propagation in
   \(\varphi\)} is defined as
$$
\clup(\phi, \alpha) = \{l \in \lit{\vek{x}} \mid  \text{\(\varphi\land\alpha \vdash_1 l\) or
   \(\varphi\land\alpha\vdash_1 \bot\)}\} \,.
$$
If $\phi \wedge \alpha \vdash_1 \bot$, then $\clup(\phi, \alpha)$ contains
all the literals and not only those derived by unit propagation. This
is useful for the characterization of PC formulas. In contradictory cases,
different PC formulas may derive different sets of literals by unit propagation,
however, the closure as defined above is the same.
In non-contradictory cases, $\clup(\phi, \alpha)$ is precisely the set of
literals derived by unit propagation.

Assume, $f(\vek{x})$ is a boolean function on $\vek{x}$ and let
\(\alpha\subseteq\lit{\vek{x}}\) be a set of literals.
The \emph{semantic closure of
   $\alpha$ with respect to $f(\vek{x})$} is the set of literals defined as
$$
\clsem(f,\alpha) = \{l \in \lit{\vek{x}} \mid f(\vek{x})\land\alpha \models l\} \,.
$$
If $\phi$ represents $f$, we can also use
$\clsem(\phi, \alpha)$ instead of $\clsem(f, \alpha)$.
For every set of literals $\alpha$ and a formula $\phi$, we have
$\clup(\phi, \alpha) \subseteq \clsem(\phi, \alpha)$, since the literals
derived by unit resolution are also semantic consequences.

\begin{lemma} \mlabel{lem:equal-closures}
A formula $\phi$ is a PC formula, if
and only if for every partial assignment $\alpha$, we have
$\clup(\phi, \alpha) = \clsem(\phi, \alpha)$.
\end{lemma}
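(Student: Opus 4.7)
The plan is to unfold both definitions and check the two inclusions separately for each direction, paying attention to the contradictory case where both closures must equal the whole set $\lit{\vek{x}}$.

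First I would note the trivial inclusion $\clup(\phi,\alpha) \subseteq \clsem(\phi,\alpha)$ which holds for any formula: if $\phi\wedge\alpha\vdash_1 l$ then $\phi\wedge\alpha\models l$ by soundness of unit resolution, and if $\phi\wedge\alpha\vdash_1\bot$ then $\phi\wedge\alpha$ is unsatisfiable, so it entails every literal and therefore $\clsem(\phi,\alpha) = \lit{\vek{x}}$, which already contains $\clup(\phi,\alpha)$. So the content of the lemma is the reverse inclusion.

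For the forward direction, assume $\phi$ is PC and take any $l \in \clsem(\phi,\alpha)$, i.e.\ $\phi\wedge\alpha\models l$. The PC condition (Definition~\ref{def:urc-pc}, \eqref{eq:def-pc-1}--\eqref{eq:def-pc-2}) applied to $\alpha$ and $l$ yields $\phi\wedge\alpha\vdash_1 l$ or $\phi\wedge\alpha\vdash_1\bot$, so $l \in \clup(\phi,\alpha)$. The one subtlety is the contradictory case $\phi\wedge\alpha\models\bot$: then $\clsem(\phi,\alpha) = \lit{\vek{x}}$, and we must check that $\clup(\phi,\alpha)$ is also the whole set. Here I would use that PC implies URC: for every literal $l$ we have $\phi\wedge\alpha\models l$, so by the PC condition $\phi\wedge\alpha\vdash_1 l$ or $\phi\wedge\alpha\vdash_1\bot$; the definition of $\clup$ puts $l$ in the closure in either case. (One can alternatively obtain $\phi\wedge\alpha\vdash_1\bot$ directly from URC and then invoke the convention in the definition of $\clup$ that in the contradictory case all literals belong to the closure.)

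For the converse, assume $\clup(\phi,\alpha) = \clsem(\phi,\alpha)$ for every $\alpha$. Given a partial assignment $\alpha$ and a literal $l$ with $\phi\wedge\alpha\models l$, we have $l \in \clsem(\phi,\alpha) = \clup(\phi,\alpha)$, which by definition of $\clup$ yields $\phi\wedge\alpha\vdash_1 l$ or $\phi\wedge\alpha\vdash_1\bot$; this is precisely the PC condition~\eqref{eq:def-pc-2}. No real obstacle arises; the only thing to be careful about is the bookkeeping around the contradictory case, which is handled cleanly by the convention built into the definition of $\clup$.
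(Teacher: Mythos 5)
Your proof is correct and follows essentially the same route as the paper: the inclusion $\clup(\phi,\alpha)\subseteq\clsem(\phi,\alpha)$ is the general fact the paper records just before the lemma, the forward direction is the paper's case split on whether $\phi\wedge\alpha\models\bot$ (with the contradictory case absorbed by the convention in the definition of $\clup$), and the converse is the same direct unfolding of the definitions into condition~\eqref{eq:def-pc-2}. Your write-up is simply a more explicit version of the paper's terse argument.
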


\begin{proof}
If $\phi$ is a PC formula, then
$\clup(\phi, \alpha) = \clsem(\phi, \alpha)$
can be proven by considering the cases
$\phi \wedge \alpha \models \bot$ and
$\phi \wedge \alpha \not\models \bot$ separately.
For the opposite direction, assume
$\clup(\phi, \alpha) = \clsem(\phi, \alpha)$.
Then for every $l$, such that $\phi \wedge \alpha \models l$,
we have $\phi \wedge \alpha \vdash_1 l$ or
$\phi \wedge \alpha \vdash_1 \bot$ as a direct consequence of
the definitions.
\end{proof}

One can verify that both $\clup$ and $\clsem$ are closure operators
on sets of literals. In particular, the system of sets
closed under any of them is closed under set intersection.
Moreover, it is well-known that two closure operators
on the subsets of the same set are equal, if and only if
they have the same closed sets. Since $\lit{\vek{x}}$
is the only closed set containing complementary literals
for both $\clup$ and $\clsem$, it is sufficient
to compare only closed sets that are partial assignments. In order
to represent partial assignments, we use the meta-variables
introduced in Section~\ref{sub:dual-rail:def}.
An assignment of these variables is the characteristic
function of a set of literals. Using this,
a partial assignment of the variables $\vek{x}$ is a total
assignment of the variables $\meta(\vek{x})$ satisfying
for every $x \in \vek{x}$ the consistency clause
$\neg \litVar{x} \vee \neg \litVar{\neg x}$.

\begin{definition} \mlabel{def:function-h}
   We denote $S(f)$ the set of partial assignments $\alpha$
   satisfying $\clsem(f, \alpha) = \alpha$ and $h_f$ the
   characteristic function of $S(f)$ represented as a boolean function
   on the variables $\meta(\vek{x})$.
\end{definition}

\begin{lemma} \mlabel{lem:closure-intersection}
   The function $h_f$ is a Horn function.
\end{lemma}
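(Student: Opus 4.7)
The plan is to use the standard characterization that a boolean function is Horn if and only if its set of models (viewed as $\{0,1\}$-vectors) is closed under coordinate-wise conjunction. So I need to show that the set of models of $h_f$, namely the characteristic vectors of sets in $S(f)$, is closed under bitwise AND.

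First I would unpack the correspondence between assignments of $\meta(\vek{x})$ and subsets of $\lit{\vek{x}}$: an assignment $\beta \in \{0,1\}^{\meta(\vek{x})}$ is the characteristic function of the literal set $\alpha_\beta = \{l \in \lit{\vek{x}} \mid \beta(\litVar{l})=1\}$, and under this correspondence the bitwise AND $\beta_1 \wedge \beta_2$ corresponds to the set intersection $\alpha_{\beta_1} \cap \alpha_{\beta_2}$. Note also that if $\alpha_{\beta_1}$ and $\alpha_{\beta_2}$ are partial assignments (contain no complementary pair), then so is their intersection.

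Next I would invoke the remark made immediately before Definition~\ref{def:function-h}: since $\clsem$ is a closure operator on subsets of $\lit{\vek{x}}$, its family of closed sets is closed under arbitrary intersection. Concretely, if $\alpha_1, \alpha_2 \in S(f)$, then $\alpha_1 \cap \alpha_2$ is a partial assignment, and by monotonicity $\clsem(f, \alpha_1 \cap \alpha_2) \subseteq \clsem(f, \alpha_i) = \alpha_i$ for $i=1,2$, hence $\clsem(f, \alpha_1 \cap \alpha_2) \subseteq \alpha_1 \cap \alpha_2$; the reverse inclusion is extensivity. Therefore $\alpha_1 \cap \alpha_2 \in S(f)$, so $S(f)$ is closed under intersection.

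Combining the two ingredients, the model set of $h_f$ is closed under bitwise AND, which is precisely the Horn characterization; hence $h_f$ is a Horn function. There is no real obstacle here beyond making sure the translation between literal sets and $\meta(\vek{x})$-assignments is stated cleanly and that one checks the intersection of two partial assignments is again a partial assignment so that one stays inside $S(f)$ rather than slipping outside its domain.
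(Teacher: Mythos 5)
Your proof is correct and follows essentially the same route as the paper: both reduce the claim to the characterization of Horn functions via closure of the model set under componentwise conjunction and then show $S(f)$ is closed under intersection, the only difference being that you verify intersection-closure directly from monotonicity and extensivity of $\clsem$ while the paper cites the general fact that the closed sets of a closure operator are intersection-closed (excluding $\lit{\vek{x}}$).
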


\begin{proof}
   The system of closed sets of $\clsem$ is $S(f) \cup \{\lit{\vek{x}}\}$
   and it is closed under set intersection.
   Since the intersection of any partial assignments $\alpha$, $\beta$
   is not equal to $\lit{\vek{x}}$, also $S(f)$ is closed under set intersection.
   The lemma follows, since a boolean
   function is a Horn function, if and only if the set of
   its satisfying assignments is closed under componentwise
   conjunction~\cite{CH11}. This corresponds to set intersection
   of the sets of literals represented by their characteristic functions.
\end{proof}

We use the implicational dual rail encoding
described in Section~\ref{sub:dual-rail:def} to capture
the connection of the propagation complete formulas for \(f\)
and the function \(h_f\).
The following property of the implicational encoding is crucial for
our purposes.

\begin{proposition} \mlabel{prop:characterize-DR}
   If $\phi(\vek{x})$ is a CNF formula not containing the empty clause, then
   the satisfying assignments of $\DR(\phi)$ are exactly the
   characteristic functions of partial assignments $\alpha$ of the
   variables in $\vek{x}$ satisfying $\clup(\phi, \alpha) = \alpha$.
\end{proposition}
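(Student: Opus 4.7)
The plan is to set up the bijection between assignments of the meta-variables $\meta(\vek{x})$ and subsets of $\lit{\vek{x}}$: given a total assignment $\beta$ of $\meta(\vek{x})$, associate the set $\alpha_\beta = \{l \in \lit{\vek{x}} \mid \beta(\litVar{l}) = 1\}$. Under this correspondence, the consistency clauses $\neg \litVar{x} \vee \neg \litVar{\neg x}$ are satisfied by $\beta$ exactly when $\alpha_\beta$ is a partial assignment (no complementary pair). The remaining task is to show that, on the class of partial assignments, $\beta$ satisfies the implication clauses \eqref{eq:DR-clause} if and only if $\clup(\phi,\alpha_\beta) = \alpha_\beta$. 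Once this is done, the statement follows by observing that $\alpha \subseteq \clup(\phi,\alpha)$ always holds, since the literals of $\alpha$ appear as unit clauses in $\phi \wedge \alpha$.

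For the ``if'' direction, assume $\clup(\phi, \alpha) = \alpha$ and consider an arbitrary implication clause coming from a pair $(C, l)$ with $l \in C \in \phi$. If all literals $\litVar{\neg e}$ for $e \in C \setminus \{l\}$ are true in $\beta_\alpha$, then every $\neg e$ lies in $\alpha$, so a single step of unit resolution using $C$ yields $l$. Hence $l \in \clup(\phi,\alpha) = \alpha$ and $\litVar{l}$ is true, satisfying the clause. Consistency clauses are satisfied because $\alpha$ is a partial assignment.

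For the converse, assume $\beta$ satisfies $\DR(\phi)$. The consistency clauses give a partial assignment $\alpha = \alpha_\beta$, and it remains to show $\clup(\phi,\alpha) \subseteq \alpha$. Describe unit propagation as the chain $U_0 = \alpha \subseteq U_1 \subseteq \cdots$ where $U_{i+1}$ adds every literal $l$ such that some clause $C \in \phi$ contains $l$ and has all other literals falsified by $U_i$; I prove by induction on $i$ that $U_i \subseteq \alpha$. The base is immediate, and the inductive step is exactly the content of the implication clause associated with the triggering pair $(C,l)$: since by induction all $\neg e$ with $e \in C \setminus \{l\}$ belong to $\alpha$, the clause \eqref{eq:DR-clause} forces $l \in \alpha$. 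Because $\alpha$ contains no complementary pair, the set $U = \bigcup_i U_i$ contains none either, so unit propagation does not reach a contradiction; therefore $\clup(\phi,\alpha) = U \subseteq \alpha$. The assumption that $\phi$ has no empty clause enters here to justify that the only way $\clup(\phi,\alpha)$ can strictly exceed $\alpha$ is via a non-trivial unit-resolution step governed by some implication clause; the main subtlety to be careful about is exactly this handling of the contradiction case, where $\clup$ artificially returns all of $\lit{\vek{x}}$ and which must be ruled out using the consistency clauses.
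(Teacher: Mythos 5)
Your proposal is correct and follows essentially the same route as the paper's proof: consistency clauses correspond to partial assignments, the implication clauses express closure of $\alpha$ under one-step unit propagation, an induction shows all propagated literals stay in $\alpha$, and the contradiction case is excluded via the no-empty-clause assumption and consistency of $\alpha$. The only cosmetic difference is that you rule out $\phi\wedge\alpha\vdash_1\bot$ by observing that the set of derivable units, being contained in $\alpha$, has no complementary pair, whereas the paper argues via a non-empty clause all of whose literals are falsified; these are the same argument in slightly different clothing.
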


\begin{proof}
   The consistency clauses in $\DR(\phi)$ are satisfied exactly
   by the characteristic functions of partial assignments.
Consider a partial assignment $\alpha$.
   The remaining clauses are equivalent to implications
   (\ref{eq:implication}).
   The conjunction of these implications expresses the condition
that every literal derivable by unit propagation from $\alpha$ belongs to $\alpha$.
Let us prove by contradiction that this implies
$\phi \wedge \alpha \not\vdash_1 \bot$.
If $\phi \wedge \alpha \vdash_1 \bot$, then there
is a non-empty clause $e_1 \vee \ldots \vee e_k \in \phi$ and the literals
$\neg e_1, \ldots, \neg e_k$ can be derived from $\phi \wedge \alpha$
by unit propagation. However, this implies that
$e_1$ is derived by unit propagation,
so $\alpha$ is not a partial assignment closed under the
rules~\eqref{eq:implication}.
It follows that $\phi \wedge \alpha \not\vdash_1 \bot$ and
$\clup(\phi, \alpha) = \alpha$, since in this case, $\clup$ is the
same as the closure under the rules~\eqref{eq:implication}.
The opposite direction is similar.
\end{proof}

Let \(\varphi(\vek{x})\) be a formula representing a function
\(f(\vek{x})\).  If a partial assignment $\alpha\subseteq\lit{\vek{x}}$
satisfies $\clsem(\phi, \alpha) = \alpha$, then $\phi \wedge \alpha$
is consistent and $\clup(\phi, \alpha) = \alpha$.
As a consequence, we have $\DR(\phi) \ge h_f$ in the sense that this
inequality holds for each assignment of the variables $\meta(\vek{x})$.
Moreover, we can characterize PC formulas representing a given function $f$
as follows.

\begin{proposition} \mlabel{prop:characterize-PC}
If $\phi$ is a CNF formula not containing the empty clause
and $f$ a boolean function, then $\phi$ is a PC formula
representing $f$, if and only if $\DR(\phi)$ represents $h_f$.
\end{proposition}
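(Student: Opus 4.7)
The plan is to reduce both directions to the combination of Lemma~\ref{lem:equal-closures} (which characterizes PC formulas via the equality of $\clup$ and $\clsem$) and Proposition~\ref{prop:characterize-DR} (which identifies the models of $\DR(\phi)$ with the $\clup$-closed partial assignments). The bridge between the two operators is the observation, already recalled before Definition~\ref{def:function-h}, that closure operators on subsets of literals agree iff their systems of closed sets agree, and that on both sides the only closed set outside the partial assignments is $\lit{\vek{x}}$ itself, so it suffices to compare closed partial assignments.

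For the forward implication, assume $\phi$ is a PC formula representing $f$. By Lemma~\ref{lem:equal-closures}, $\clup(\phi,\alpha)=\clsem(\phi,\alpha)$ for every partial assignment $\alpha$, and since $\phi$ represents $f$, this common value equals $\clsem(f,\alpha)$. Hence the partial assignments $\alpha$ with $\clup(\phi,\alpha)=\alpha$ are exactly the members of $S(f)$. By Proposition~\ref{prop:characterize-DR}, the satisfying assignments of $\DR(\phi)$ are precisely the characteristic functions of such $\alpha$, which by Definition~\ref{def:function-h} are exactly the satisfying assignments of $h_f$. So $\DR(\phi)=h_f$.

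For the converse, assume $\DR(\phi)=h_f$. Combining Proposition~\ref{prop:characterize-DR} and Definition~\ref{def:function-h}, the partial assignments closed under $\clup(\phi,\cdot)$ coincide with those closed under $\clsem(f,\cdot)$, i.e.\ with $S(f)$. I first verify that $\phi$ represents $f$ by restricting to total assignments $\alpha$: for a total $\alpha$ satisfying $\phi$, no new literal can be unit-derived, hence $\alpha$ is $\clup$-closed, so $\alpha\in S(f)$, so $f\wedge\alpha$ is consistent and $\alpha$ satisfies $f$; conversely, if $\alpha$ is a total model of $f$, then $\alpha\in S(f)$ is $\clup$-closed, so $\phi\wedge\alpha\not\vdash_1\bot$, which for a total $\alpha$ forces $\alpha$ to satisfy $\phi$ (otherwise some clause would be fully falsified and unit resolution would derive $\bot$). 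Consequently $\clsem(\phi,\cdot)=\clsem(f,\cdot)$, so $\clup(\phi,\cdot)$ and $\clsem(\phi,\cdot)$ share the same closed partial assignments, hence the same closed sets of literals altogether, and therefore agree as closure operators. Lemma~\ref{lem:equal-closures} then yields that $\phi$ is PC.

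The main subtlety I expect is in the converse direction: one must argue that equality of the two closure systems on partial assignments already implies that $\phi$ represents the right function before one can invoke Lemma~\ref{lem:equal-closures}. The clean way, as sketched, is to specialize to total assignments and use the fact that for a full $\alpha$ the statement ``$\phi\wedge\alpha\not\vdash_1\bot$'' is equivalent to ``$\alpha$ satisfies $\phi$'', because the excluded case would produce an all-falsified clause and hence an immediate unit-resolution refutation; this uses in an essential way that $\phi$ does not contain the empty clause, exactly as in the hypothesis of Proposition~\ref{prop:characterize-DR}.
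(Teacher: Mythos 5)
Your proof is correct and follows essentially the same route as the paper: both directions reduce to Proposition~\ref{prop:characterize-DR}, Definition~\ref{def:function-h}, the equality of closure operators via their closed sets, and Lemma~\ref{lem:equal-closures}. The only cosmetic difference is in showing that $\phi$ represents $f$ in the converse direction, where the paper uses the one-line substitution $\litVar{x}\leftarrow x$, $\litVar{\neg x}\leftarrow\neg x$ (turning $\DR(\phi)$ into $\phi$ and $h_f$ into $f$), while you argue directly on total assignments using the no-empty-clause hypothesis --- the same content made explicit.
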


\begin{proof}
   Assume, $\DR(\phi)$ represents $h_f$.
   The substitutions $\litVar{x} \leftarrow x$ and $\litVar{\neg x} \leftarrow \neg x$
   for all $x \in \vek{x}$ transform $\DR(\phi)$ into $\phi$ and
   $h_f$ into $f$. It follows that $\phi$ represents $f$.
   By definition of $h_f$ and Proposition~\ref{prop:characterize-DR},
   the equivalence between $\DR(\phi)$ and $h_f$ implies
   that the closure operators $\clup$ and $\clsem$ for the
   formula $\phi$ have the same closed sets and, hence, are the same.
Together with Lemma~\ref{lem:equal-closures}, this implies
that $\phi$ is a PC formula.

   If $\phi$ is a PC representation of $f$, then one can prove that
   $\DR(\phi)$ and $h_f$ describe the same set of partial
   assignments using the same steps as above in the reversed order.
\end{proof}

\begin{proof}[Proof of Theorem~\ref{thm:characterize-PC}]
Assume that $\phi$ and $\psi$ satisfy the assumption and let $f$
be the function they represent.
Since $\psi$ is a PC representation of $f$,
Proposition~\ref{prop:characterize-PC} with $\phi$ replaced by $\psi$
implies that $\DR(\psi)$ represents $h_f$.
The proof of the theorem is then finished by Proposition~\ref{prop:characterize-PC}
used for the formula $\phi$.
\end{proof}


\section{URC Encoding of a q-Horn Formula} 
\label{sec:q-horn:urc-enc} 

In this section, we show that every q-Horn function represented by
a q-Horn formula \(\varphi(\vek{x})\) has a URC encoding of size polynomial
in the size of $\phi$. Let us fix a valuation \(\gamma\) which
shows that \(\varphi\) is q-Horn. For simplicity, we assume
\(\gamma(x)\geq\gamma(\neg x)\) for every variable \(x\in\vek{x}\). If
this assumption is not satisfied for a variable \(x\), we can replace
all occurrences of \(x\) with \(\neg x\) and vice versa and set
\(\gamma(x)=1-\gamma(x)\). As a consequence, all variables in $\phi$
have weight \(1\) or \(\frac{1}{2}\). The set of variables of weight
\(1\) and \(\frac{1}{2}\) will be denoted as \(\vek{x}_1\) and
\(\vek{x}_2\), respectively.

Given a CNF formula \(\varphi(\vek{x})\) and a partial assignment
\(\beta\subseteq\lit{\vek{x}}\) we denote \(\varphi(\beta)\) the
formula which originates from \(\varphi\) by applying partial
assignment \(\beta\). In particular, the clauses satisfied by
\(\beta\) are removed from \(\varphi\) and negations of literals in
\(\beta\) are removed from the remaining clauses. Let us recall that a
partial assignment \(\beta\subseteq\lit{\vek{x}}\) is an \emph{autark
   assignment}~\cite{K00} for a CNF \(\varphi\) if for every clause
\(C\in\varphi\) such that $C(\beta)$ is different from $C$, we have
that \(C\) is satisfied by \(\beta\). In particular, \(\varphi\) is
satisfiable if and only if \(\varphi(\beta)\) is satisfiable.

We split the formula \(\varphi(\vek{x}_1, \vek{x}_2)\) into two subformulas
$\phi_1(\vek{x}_1)$ and $\phi_2(\vek{x}_1, \vek{x}_2)$. A clause
\(C\in \varphi\) belongs to $\phi_1$ if \(\var{C}\subseteq\vek{x}_1\)
and it belongs to $\phi_2$ otherwise. By construction, \(\varphi_1\)
is a Horn formula. Clauses in \(\varphi_2\) contain
one or two variables of weight \(\frac{1}{2}\). Other literals in
these clauses have weight \(0\) and they are negations of variables
of weight \(1\). Let us recall Algorithm~\ref{alg:q-horn-sat} for checking
satisfiability of a q-Horn formula \(\varphi\) described
in~\cite{BCH90}. The input to the satisfiability checking procedure is
a q-Horn formula \(\varphi\) and a valuation \(\gamma\)
satisfying the assumption formulated above.
The proof of correctness of Algorithm~\ref{alg:q-horn-sat}
relies on the following property of
\(\varphi(\beta) = \phi_1(\beta) \wedge \phi_2(\beta)\). If \(C\) is
a clause in \(\varphi(\beta)\) which
contains a literal on a variable in \(\vek{x}_1\), then \(C\) contains a
negative literal on a (possibly different) variable in \(\vek{x}_1\).
It follows that these
clauses can be satisfied by an autark assignment that assigns $0$
to all the remaining variables $\vek{x}_1$.

\begin{algorithm}[t]
   \DontPrintSemicolon%
   \KwIn{q-Horn formula \(\varphi(\vek{x})\), valuation $\gamma$ satisfying
\(\gamma(x)\in\{\frac{1}{2}, 1\}\) for all \(x\in\vek{x}\)}
   \KwOut{\emph{SAT} if \(\varphi\) is satisfiable, \emph{UNSAT}
      otherwise}
   let $\phi_1$, $\phi_2$, and $\vek{x}_1$ be as described in the text\;
   \lIf{\(\varphi_1\vdash_1\bot\)}{\Return{UNSAT}}
   \(\beta \gets\{u \in \lit{\vek{x}_1} \mid\varphi_1 \vdash_1 u\}\)\;
   \(\varphi_2'\gets\{C\in\varphi_2(\beta)\mid C\subseteq\lit{\vek{x}_2}\}\)\label{step:q-horn-sat:4}\;
   \leIf{\(\varphi_2'\) is satisfiable\label{step:q-horn-sat:5}}{\Return{SAT}}{\Return{UNSAT}}
   \caption{Satisfiability checking of a q-Horn
      formula~\cite{BCH90}\label{alg:q-horn-sat}}
\end{algorithm}

\begin{table}[b]
   \begin{center}
      \begin{tabular}{Q l l}
         \toprule
         \multicolumn{1}{c}{group} & clause & condition\\
         \midrule
         \label{groupQH1} 
         & \(C\) & \(C\in\varphi\) and \(|\var{C}\cap\vek{x}_2|\leq 1\)\\
         \label{groupQH2} 
         & \(\neg x_{i_1}\lor\dots\lor \neg x_{i_k}\lor\litVar{u\lor v}\)
         & \(\neg x_{i_1}\lor\dots\lor \neg x_{i_k}\lor u\lor v\in\varphi_2\)\\
         \label{groupQH3} 
         & \(\neg \litVar{u\lor v}\lor\neg\litVar{\neg v\lor
               w}\lor\litVar{u\lor w}\) & \(u\lor v, \neg v\lor w,
               u\lor w\in \varphi_q^+\)\\
         \label{groupQH4} 
         & \(\neg\litVar{u\lor v}\lor\neg\litVar{u\lor
               \neg v}\lor u\) & \(u\lor v, u\lor \neg v\in \varphi_q^+\)\\
         \label{groupQH5} 
         & \(\neg\litVar{u \lor v}\lor v\lor u\) &
         \(u \lor v\in \varphi_q^+\)\\
         \label{groupQH6} 
         & \(\neg u\lor\litVar{u \lor v}\) & \(u\lor v\in\varphi_q^+\)\\
         \bottomrule
      \end{tabular}
   \end{center}
   \caption{The clauses of encoding \(\psi(\vek{x}, \vek{y})\) for a
      q-Horn formula \(\varphi(\vek{x})\), where $u$ and $v$ denote
arbitrary literals from $\lit{\vek{x}_2}$.}\label{tab:qh-enc:clauses}
\end{table}

We shall construct a URC encoding $\psi(\vek{x}, \vek{y})$ for a given
q-Horn formula \(\varphi(\vek{x})\). Unit propagation in the encoding
allows to simulate Algorithm~\ref{alg:q-horn-sat} with a formula
$\phi \wedge \alpha$ as an input, where $\alpha \subseteq \lit{\vek{x}}$.
Formula \(\varphi_2'\) created in Step~\ref{step:q-horn-sat:4}
is a 2-CNF formula on variables \(\vek{x}_2\) and
we can thus check its satisfiability in linear time in
Step~\ref{step:q-horn-sat:5}, see for example~\cite{APT79}. In the URC
encoding, Step~\ref{step:q-horn-sat:5} is implemented by simulating
a resolution derivation of a contradiction from \(\varphi_2'\) in
which only resolvents of size at most 2 are needed. The number of such
resolvents is at most quadratic in the number of the variables in \(\vek{x}_2\).
We can thus encode the resolution rules into a polynomial number of
clauses of the encoding.

Let us first introduce a necessary notation. Given a q-Horn formula
\(\varphi = \phi_1 \wedge \phi_2\),
we denote \(\varphi_q=\{C\cap\lit{\vek{x}_2}\mid C\in\varphi_2\}\)
which is a 2-CNF formula on variables of weight \(\frac{1}{2}\).
Moreover, let \(\varphi_q^+\) be the set of all clauses of size \(2\)
which can be derived by resolution from \(\varphi_q\) and we associate a
meta-variable \(\litVar{C}\) with every clause \(C\in \varphi_q^+\).
Note that if we test satisfiability of $\phi \wedge \alpha$
instead of $\phi$ in Algorithm~\ref{alg:q-horn-sat}, the formula $\phi_2'$
can contain additional unit clauses from $\alpha \cap \lit{\vek{x}_2}$,
however, all its quadratic clauses belong to $\phi_q$. The encoding
\(\psi(\vek{x}, \vek{y})\) for \(\varphi(\vek{x})\) uses the variables
\[\vek{y}=\{\litVar{C}\mid C\in\varphi_q^+\}\]
as auxiliary variables and consists of the clauses in Table~\ref{tab:qh-enc:clauses}.

Let us look more closely to clauses of the encoding.
Group~\ref{groupQH1} consists of all clauses of $\phi_1$ and some
of the clauses of $\phi_2$.
The clauses of~\ref{groupQH1} that belong to
\(\varphi_1\) allow unit
propagation on the clauses of \(\varphi_1\), thus implementing the
first part of Algorithm~\ref{alg:q-horn-sat}. Clauses of
group~\ref{groupQH1} that belong to \(\varphi_2\) and clauses of
group~\ref{groupQH2} allow to derive by unit propagation a representation
of the clauses of \(\varphi_2'\). Unit clauses are represented directly
and each binary clause in \(\varphi_2'\) is represented by the
corresponding meta-variable.
Clauses of groups~\ref{groupQH3} to~\ref{groupQH5}
allow to simulate resolution on binary clauses in \(\varphi_2'\)
represented by meta-variables by unit propagation in \(\psi(\vek{x}, \vek{y})\).
In particular, clause
\(\neg \litVar{u\lor v}\lor\neg\litVar{\neg v\lor w}\lor\litVar{u\lor w}\) in
group~\ref{groupQH3} allows to derive the literal \(\litVar{u\lor w}\)
representing a resolvent assuming the literals \(\litVar{u\lor v}\) and
\(\litVar{\neg v\lor w}\) representing the original clauses. Similarly,
clause \(\neg\litVar{u\lor v}\lor\neg\litVar{u\lor \neg v}\lor u\) in
group~\ref{groupQH4} allows to
derive resolvent \(u\) assuming \(\litVar{u\lor v}\) and
\(\litVar{u\lor\neg v}\). Finally, clause \(\neg\litVar{u \lor
      v}\lor v\lor u\) in group~\ref{groupQH5} allows to derive
resolvent \(u\) assuming \(\litVar{u\lor v}\) and \(\neg v\).
Clauses of groups~\ref{groupQH5} and~\ref{groupQH6} together represent
the equivalences \(u\lor v\Leftrightarrow \litVar{u\lor v}\) for all
clauses in $\varphi_q^+$. They thus
define the semantics of the meta-variables \(\litVar{u\lor v}\)
in $\vek{y}$.

\begin{lemma}
   \mlabel{lem:qh-enc:enc} 
   Formula \(\psi(\vek{x}, \vek{y})\) is an encoding of
   \(\varphi(\vek{x})\).
\end{lemma}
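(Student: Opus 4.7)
The plan is to verify both directions of the encoding condition~\eqref{eq:enc-def} for \(\varphi\) and \(\psi\): for every assignment \(\assign{a}\in\{0,1\}^{\vek{x}}\), the formula \(\varphi(\assign{a})\) evaluates to \(1\) if and only if some \(\assign{b}\in\{0,1\}^{\vek{y}}\) satisfies \(\psi(\assign{a},\assign{b})\).

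For the soundness direction I would classify the clauses of \(\varphi\) by how many \(\vek{x}_2\)-literals they contain. Every clause of \(\varphi_1\) and every clause of \(\varphi_2\) with at most one \(\vek{x}_2\)-literal is itself a clause of group~\ref{groupQH1}, hence is directly satisfied by any \(\assign{a}\) that extends to a model of \(\psi\). For a clause \(C=\neg x_{i_1}\vee\dots\vee\neg x_{i_k}\vee u\vee v\) of \(\varphi_2\) with two \(\vek{x}_2\)-literals \(u,v\), the q-Horn valuation forces \(u\vee v\in\varphi_q\subseteq\varphi_q^+\); hence \(\psi\) contains both the clause \(\neg x_{i_1}\vee\dots\vee\neg x_{i_k}\vee\litVar{u\vee v}\) from group~\ref{groupQH2} and the clause \(\neg\litVar{u\vee v}\vee u\vee v\) from group~\ref{groupQH5}, whose resolvent on \(\litVar{u\vee v}\) is exactly \(C\). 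Thus \(\psi\models C\) for every \(C\in\varphi\). The q-Horn assumption is needed here only to guarantee that no \(\varphi_2\)-clause has more than two \(\vek{x}_2\)-literals, which makes the case analysis exhaustive.

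For the completeness direction, given \(\assign{a}\) with \(\varphi(\assign{a})=1\), I would extend it by setting the meta-variable \(\litVar{C}\) equal to the truth value of the clause \(C\) under \(\assign{a}\), for every \(C\in\varphi_q^+\); this is well-defined since \(\varphi_q^+\) consists of clauses over \(\vek{x}_2\). I would then check each row of Table~\ref{tab:qh-enc:clauses} against this extension. Group~\ref{groupQH1} is satisfied because \(\assign{a}\models\varphi\). For group~\ref{groupQH2}, if \(\assign{a}\) makes all the literals \(x_{i_j}\) true, the parent \(\varphi_2\)-clause forces \(u\vee v\) true at \(\assign{a}\), so \(\litVar{u\vee v}=1\) by construction. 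Groups~\ref{groupQH3} and~\ref{groupQH4} follow from soundness of resolution applied to truth values at \(\assign{a}\) (in the latter case, if both \(u\vee v\) and \(u\vee\neg v\) hold at \(\assign{a}\), then \(u\) must hold). Groups~\ref{groupQH5} and~\ref{groupQH6} are immediate from the definition of the extension.

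No genuine obstacle arises here; the argument is a routine clause-by-clause verification that uses only the definitions of \(\varphi_q\) and \(\varphi_q^+\) together with soundness of resolution. The substantive work, namely that unit propagation in \(\psi\) simulates Algorithm~\ref{alg:q-horn-sat} and thereby makes \(\psi\) a URC encoding, is a separate statement to be handled in the subsequent lemmas.
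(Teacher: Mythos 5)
Your proof is correct and rests on the same key observation as the paper's: groups~\ref{groupQH5} and~\ref{groupQH6} define each meta-variable $\litVar{u\lor v}$ as the clause $u\lor v$, so that groups~\ref{groupQH3} and~\ref{groupQH4} follow by soundness of resolution and groups~\ref{groupQH1} and~\ref{groupQH2} together with these definitions recover exactly $\varphi$. The paper packages this as the formula equivalence of $\psi$ with $\varphi$ conjoined with the definitional clauses, while you unfold the same idea into a direct two-direction check of condition~\eqref{eq:enc-def}; the content is essentially identical.
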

\begin{proof}
The clauses of groups~\ref{groupQH5} and~\ref{groupQH6} imply
the clauses of groups~\ref{groupQH3} and~\ref{groupQH4}.
Moreover, they also imply that
the clauses of groups~\ref{groupQH1} and~\ref{groupQH2} are
equivalent to $\phi(\vek{x})$. It follows that $\psi(\vek{x}, \vek{y})$
is equivalent to the conjunction of $\phi(\vek{x})$ and the clauses
of groups~\ref{groupQH5} and~\ref{groupQH6}. This conjunction is
clearly an encoding of $\phi(\vek{x})$ obtained by adding definitions
of the new variables $\vek{y}$.
\end{proof}

\begin{lemma}
   \mlabel{lem:qh-enc:urc-enc} 
Let \(\alpha\subseteq\lit{\vek{x}}\) be a partial assignment.
If \(\psi(\vek{x}, \vek{y})\land\alpha\models\bot\), then
   \(\psi(\vek{x}, \vek{y})\land\alpha\vdash_1\bot\).
\end{lemma}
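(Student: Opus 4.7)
The plan is to simulate Algorithm~\ref{alg:q-horn-sat} applied to $\phi \wedge \alpha$ by unit propagation inside $\psi \wedge \alpha$. Write $\alpha = \alpha_1 \cup \alpha_2$ with $\alpha_i = \alpha \cap \lit{\vek{x}_i}$, view the literals of $\alpha_i$ as unit clauses added to $\phi_i$, and note that $\phi \wedge \alpha$ is still q-Horn with the same valuation. By Lemma~\ref{lem:qh-enc:enc}, the assumption $\psi \wedge \alpha \models \bot$ yields $\phi \wedge \alpha \models \bot$.

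For the first stage, note that $\phi_1$ is Horn and therefore URC, so either $\phi_1 \wedge \alpha_1 \vdash_1 \bot$, in which case $\psi \wedge \alpha \vdash_1 \bot$ is immediate (the clauses of $\phi_1$ belong to group~\ref{groupQH1} of $\psi$ and $\alpha_1 \subseteq \alpha$), or else $\phi_1 \wedge \alpha_1$ is satisfiable. In the latter case, let $\beta \supseteq \alpha_1$ be the unit propagation closure of $\phi_1 \wedge \alpha_1$ in $\lit{\vek{x}_1}$; this is the set $\beta$ computed in Algorithm~\ref{alg:q-horn-sat}, and the same derivation runs inside $\psi \wedge \alpha$ using only the clauses of $\phi_1$ in group~\ref{groupQH1}. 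Once $\beta$ is available, unit propagation on group~\ref{groupQH1} further derives the literal $u$ for every $\phi_2$-clause $\neg x_{i_1} \vee \cdots \vee \neg x_{i_k} \vee u$ (with one $\vek{x}_2$-literal $u$) whose negative $\vek{x}_1$-literals are all falsified by $\beta$, and unit propagation on group~\ref{groupQH2} derives the meta-variable $\litVar{u \vee v}$ for every $\phi_2$-clause $\neg x_{i_1} \vee \cdots \vee \neg x_{i_k} \vee u \vee v$ whose negative $\vek{x}_1$-literals are all falsified by $\beta$. Hence at the end of the first stage, $\psi \wedge \alpha$ has propagated all literals of $\alpha_2$, all unit clauses of $\phi_2'$ as literals on $\vek{x}_2$, and, for every binary clause $u \vee v \in \phi_2'$, the corresponding meta-variable $\litVar{u \vee v}$.

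The second stage simulates the 2-CNF satisfiability check in Step~\ref{step:q-horn-sat:5} of the algorithm. By the autark argument recalled before Algorithm~\ref{alg:q-horn-sat}, the unsatisfiability of $\phi \wedge \alpha$ implies that the 2-CNF $\phi_2' \wedge \alpha_2$ is unsatisfiable. Fix any resolution refutation of it; since resolving two clauses of size at most two cannot produce a clause of size more than two, every intermediate clause of the refutation has size at most two. The next step is to prove by induction on the length of the refutation that every clause $D$ appearing in it has a corresponding literal derived by unit propagation from $\psi \wedge \alpha$: a unit $D = u$ on $\vek{x}_2$ is represented by the literal $u$ itself, and a binary $D = u \vee v$ (necessarily in $\varphi_q^+$) is represented by the meta-variable $\litVar{u \vee v}$. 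The base case follows from the first stage. The inductive step uses the three relevant resolution rules and their counterparts in $\psi$: resolving two binaries into a binary is simulated by a clause from group~\ref{groupQH3}, resolving two binaries into a unit by a clause from group~\ref{groupQH4}, and resolving a binary with a unit by a clause from group~\ref{groupQH5}. The refutation terminates with a pair of complementary units on a variable of $\vek{x}_2$, from which $\psi \wedge \alpha \vdash_1 \bot$ follows by standard unit resolution.

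The main obstacle is to keep the induction inside the formalism of the encoding: whenever a binary clause appears in the refutation, its meta-variable must be a variable of $\psi$, that is, the clause must belong to $\varphi_q^+$. Binaries present initially are $\vek{x}_2$-restrictions of $\phi_2$-clauses and hence lie in $\varphi_q \subseteq \varphi_q^+$. In the inductive step, a binary resolvent is produced only by resolving two binaries in $\varphi_q^+$, and such a resolvent is in $\varphi_q^+$ by definition; no other combination of clauses of size at most two can produce a binary, since a resolution involving a unit yields at most a unit. A short case check disposes of the degenerate situations: tautological resolvents can always be avoided in a minimal refutation, and a ``binary'' whose two literals happen to coincide is in fact a unit and is delivered by group~\ref{groupQH4} rather than group~\ref{groupQH3}.
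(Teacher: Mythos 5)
Your proposal is correct and takes essentially the same route as the paper's proof: simulate Algorithm~\ref{alg:q-horn-sat} on $\varphi\wedge\alpha$ by unit propagation in $\psi\wedge\alpha$, using the clauses of group~\ref{groupQH1} for the Horn part, groups~\ref{groupQH1}--\ref{groupQH2} to derive the units and meta-variables representing $\varphi_2'\wedge\alpha_2$, and groups~\ref{groupQH3}--\ref{groupQH5} to simulate a resolution refutation of this 2-CNF. The only difference is that you spell out the induction over the refutation (closure of $\varphi_q^+$ under the relevant resolution steps, merged and tautological resolvents), which the paper states in a single sentence.
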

\begin{proof}
Assume, \(\psi(\vek{x}, \vek{y})\land\alpha\models\bot\).
By Lemma~\ref{lem:qh-enc:enc}, we have \(\phi(\vek{x})\land\alpha\models\bot\).
If $\gamma$ satisfies (\ref{eq:q-horn}) for the formula $\phi$,
it satisfies (\ref{eq:q-horn}) also for the formula $\phi \wedge \alpha$.
It follows that Algorithm~\ref{alg:q-horn-sat} detects unsatisfiability of
\(\varphi\land\alpha\) using the valuation $\gamma$ derived originally
for $\phi$. Let $\alpha_1=\alpha \cap \lit{\vek{x}_1}$ and
$\alpha_2=\alpha \cap \lit{\vek{x}_2}$.
When used for $\phi \wedge \alpha$, Algorithm~\ref{alg:q-horn-sat}
uses $\phi_i \wedge \alpha_i$ instead of $\phi_i$ for $i=1,2$.

If \(\varphi_1\land\alpha_1\vdash_1\bot\), then also
\(\psi(\vek{x}, \vek{y})\land\alpha\vdash_1\bot\) since
   \(\varphi_1 \wedge \alpha_1 \subseteq \psi \wedge \alpha\) due to clauses in
   group~\ref{groupQH1}. Now assume
   \(\varphi_1\land\alpha_1\not\vdash_1\bot\).
Consider the assignment $\assign{b}_1$ used to obtain
$\varphi_2' \wedge \alpha_2 = (\phi_2 \wedge \alpha_2)(\assign{b}_1)$
in Algorithm~\ref{alg:q-horn-sat}.
Since $\assign{b}_1$ is an autark assignment for the formula
$\phi \wedge \alpha$ and satisfies all clauses of $\phi_1 \wedge \alpha_1$,
we have \(\varphi_2'\land\alpha_2\models\bot\). We claim that in this
   case for any clause \(u\lor v\in\varphi_2'\) we have
   \(\psi(\vek{x}, \vek{y})\land\alpha\vdash_1\litVar{u\lor v}\).
   Since \(u\lor v\in\varphi_2'\), there must be a clause \(C=\neg
      x_{i_1}\lor\dots\lor\neg x_{i_k}\lor u\lor v\) in \(\varphi_2\)
   and \(\varphi_1\land\alpha_1\vdash_1 x_{i_j}\) for every \(j=1,
      \dots, k\). Using clauses of group~\ref{groupQH1} we derive that
   also \(\psi(\vek{x}, \vek{y})\land\alpha\vdash_1 x_{i_j}\) for
   every \(j=1, \dots, k\) and then using a clause of
   group~\ref{groupQH2} corresponding to \(C\) we get \(\psi(\vek{x},
      \vek{y})\land\alpha\vdash_1 \litVar{u \lor v}\). Similarly,
   using clauses of group~\ref{groupQH1} in $\phi_2$, we obtain
   \(\psi(\vek{x}, \vek{y})\land\alpha\vdash_1 u\) for any unit clause
   \(u\in \varphi_2'\). Unit clauses in $\alpha_2$ are contained
in $\psi(\vek{x}, \vek{y})\land\alpha$ directly. Since
   \(\varphi_2'\land\alpha_2\models\bot\), there is a literal
   \(u\in\lit{\vek{x}_2}\), such that unit clauses \(u\) and
   \(\neg u\) can be derived by resolution from
   \(\varphi_2'\land\alpha_2\). Clauses of
   groups~\ref{groupQH3} to~\ref{groupQH5} allow to simulate
   resolution on at most binary clauses using unit propagation.
Using this, we obtain \(\psi(\vek{x},
      \vek{y})\land\alpha\vdash_1 u\) and \(\psi(\vek{x},
      \vek{y})\land\alpha\vdash_1 \neg u\). Together we obtain
   \(\psi(\vek{x}, \vek{y})\land\alpha\vdash_1 \bot\).
\end{proof}

Next, we show that \(\psi(\vek{x}, \vek{y})\) is a URC formula. We will
first show that we can allow positive occurrences of variables
from \(\vek{y}\) in the partial assignment.

\begin{lemma}
   \mlabel{lem:qh-enc:urc-pos-y} 
   Let \(\alpha\subseteq\lit{\vek{x}}\cup\vek{y}\) be a partial
   assignment. If \(\psi(\vek{x}, \vek{y})\land\alpha\models\bot\),
   then
   \(\psi(\vek{x}, \vek{y})\land\alpha\vdash_1\bot\).
\end{lemma}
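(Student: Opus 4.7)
The plan is to reduce the statement to Lemma~\ref{lem:qh-enc:urc-enc} by interpreting the positive $\vek{y}$-literals in $\alpha$ as additional clauses added to the underlying q-Horn formula. Write $\alpha = \alpha_x \cup \alpha_y$ with $\alpha_x \subseteq \lit{\vek{x}}$ and $\alpha_y \subseteq \vek{y}$. Each meta-variable $\litVar{C} \in \alpha_y$ corresponds to a binary clause $C = u \vee v \in \varphi_q^+$; let $B$ be the set of all such clauses, and set $\varphi' = \varphi \wedge \bigwedge_{C \in B} C$. The aim is to apply Lemma~\ref{lem:qh-enc:urc-enc} to $\varphi'$ with partial assignment $\alpha_x \subseteq \lit{\vek{x}}$ and recognize the resulting refutation as one for $\psi \wedge \alpha$.

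Three facts are needed for the reduction. First, $\varphi'$ must be q-Horn with the same valuation $\gamma$ and the same partition $\vek{x}_1, \vek{x}_2$. This is immediate because every $u \vee v \in B$ has $\gamma(u) = \gamma(v) = 1/2$, so the q-Horn inequality holds with equality, and $B \subseteq \lit{\vek{x}_2}$ preserves the partition. Second, the 2-CNF restriction $\varphi_q' = \varphi_q \cup B$ must satisfy $(\varphi_q')^+ = \varphi_q^+$, so that the encoding $\psi'$ of $\varphi'$ built as in Section~\ref{sec:q-horn:urc-enc} uses exactly the same auxiliary variable set $\vek{y}$. The inclusion $\varphi_q^+ \subseteq (\varphi_q')^+$ is trivial; the reverse holds because $B \subseteq \varphi_q^+$, so any resolvent from $\varphi_q'$ is already a resolvent from $\varphi_q$. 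Third, a clause-by-clause comparison against Table~\ref{tab:qh-enc:clauses} shows that $\psi'$ equals $\psi$ together with the unit clauses $\{\litVar{C} : C \in B\}$: these unit clauses arise from group~\theqhgrprowcntr\ applied to the zero-prefix clauses in $B$ (i.e.\ the case $k=0$ of group (q2)), group~(q1) is unchanged because elements of $B$ have two variables in $\vek{x}_2$, and groups~(q3)--(q6) are indexed by $\varphi_q^+$ and hence identical in both encodings. Consequently $\psi' \wedge \alpha_x$ and $\psi \wedge \alpha$ are the same formula, so a unit-propagation refutation of one is a unit-propagation refutation of the other.

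It remains to check that $\psi \wedge \alpha \models \bot$ implies $\varphi' \wedge \alpha_x \models \bot$, which is the input required by Lemma~\ref{lem:qh-enc:urc-enc}. Suppose for contradiction that some assignment $\assign{a}$ satisfies $\varphi' \wedge \alpha_x$. Then $\assign{a}$ satisfies $\varphi$, so by Lemma~\ref{lem:qh-enc:enc} there exists $\assign{b}$ with $\psi(\assign{a}, \assign{b})$ true. The equivalences $u \vee v \Leftrightarrow \litVar{u \vee v}$ forced for every clause in $\varphi_q^+$ by groups~(q5) and~(q6) (the latter instantiated for both orderings of the literals) force $\assign{b}$ to set each $\litVar{C}$ with $C \in B$ to true, because $\assign{a}$ satisfies $C$. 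Thus $(\assign{a}, \assign{b})$ is consistent with $\alpha$, contradicting $\psi \wedge \alpha \models \bot$. Applying Lemma~\ref{lem:qh-enc:urc-enc} to $\varphi'$ and $\alpha_x$ now yields $\psi' \wedge \alpha_x \vdash_1 \bot$, which is the desired $\psi \wedge \alpha \vdash_1 \bot$.

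The step I expect to be the most delicate is pinning down the identity $(\varphi_q')^+ = \varphi_q^+$, since everything else follows once the encodings are recognized to live on the same auxiliary variables. The identity itself is short, but it is the conceptual core of the argument: it is what lets us treat positive unit assumptions on the meta-variables as honest strengthenings of the q-Horn formula without enlarging the encoding.
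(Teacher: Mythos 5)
Your proof is correct and takes essentially the same route as the paper's: interpret the positively assigned meta-variables as binary clauses added to \(\varphi\), observe that the resulting q-Horn formula \(\varphi'\) has Table~\ref{tab:qh-enc:clauses} encoding \(\psi' = \psi \wedge \alpha_y\) over the same auxiliary variables, and invoke Lemma~\ref{lem:qh-enc:urc-enc} for \(\varphi'\) with \(\alpha_x\). Your explicit checks that \((\varphi_q')^+ = \varphi_q^+\) and that the only new clauses are the unit clauses \(\litVar{C}\) are details the paper leaves implicit, and your semantic detour establishing \(\varphi' \wedge \alpha_x \models \bot\) is unnecessary (since \(\psi' \wedge \alpha_x\) is literally the formula \(\psi \wedge \alpha\), its unsatisfiability is immediate) but harmless.
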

\begin{proof}
Assume \(\psi(\vek{x}, \vek{y})\land\alpha\models\bot\).
   Let us split \(\alpha\) in two partial assignments
   \(\alpha_x=\alpha\cap\lit{\vek{x}}\) and
      \(\alpha_y=\alpha\cap\vek{y}\).
The formula
   \begin{equation*}
      \varphi'(\vek{x})=\varphi(\vek{x})\land \bigwedge_{\litVar{C}\in\alpha_y}C
   \end{equation*}
is q-Horn using the valuation $\gamma$, since we add to $\phi$
binary clauses on the variables $\vek{x}_2$.
The encoding of \(\varphi'(\vek{x})\) constructed
according to Table~\ref{tab:qh-enc:clauses} is
   \(\psi'(\vek{x}, \vek{y})=\psi(\vek{x}, \vek{y})\land \alpha_y\).
   Since \(\alpha=\alpha_x\cup\alpha_y\), we have \(\psi(\vek{x},
      \vek{y})\land\alpha=\psi'(\vek{x}, \vek{y})\land\alpha_x\) and
   thus \(\psi'(\vek{x}, \vek{y})\land\alpha_x\models\bot\). By
   Lemma~\ref{lem:qh-enc:urc-enc} we get that \(\psi'(\vek{x},
      \vek{y})\land\alpha_x\models\bot\) implies
   \(\psi'(\vek{x}, \vek{y})\land\alpha_x\vdash_1\bot\). This is
   equivalent to
   \(\psi(\vek{x}, \vek{y})\land\alpha\vdash_1\bot\).
\end{proof}

\begin{lemma}
   \mlabel{lem:qh-enc:urc} 
   Let \(\alpha\subseteq\lit{\vek{x} \cup \vek{y}}\) be a partial
   assignment. If \(\psi(\vek{x}, \vek{y})\land\alpha\models\bot\),
   then \(\psi(\vek{x}, \vek{y})\land\alpha\vdash_1\bot\).
\end{lemma}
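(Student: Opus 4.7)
The plan is to reduce to Lemma \ref{lem:qh-enc:urc-pos-y} by eliminating the negative occurrences of auxiliary variables in $\alpha$. The key observation is that the clauses of groups \ref{groupQH5} and \ref{groupQH6} provide a tight coupling between $\litVar{u\vee v}$ and the pair of literals $u,v$: specifically, the group \ref{groupQH6} clauses $\neg u\vee\litVar{u\vee v}$ and $\neg v\vee\litVar{u\vee v}$ let unit propagation derive $\neg u$ and $\neg v$ from $\neg\litVar{u\vee v}$, while the group \ref{groupQH5} clause $\neg\litVar{u\vee v}\vee u\vee v$ shows that, semantically, $\neg u\wedge\neg v\models\neg\litVar{u\vee v}$ in $\psi$.

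First I would split $\alpha=\alpha_1\cup\alpha_y^{-}$, where $\alpha_y^{-}$ collects the literals $\neg\litVar{C}\in\alpha$ and $\alpha_1\subseteq\lit{\vek{x}}\cup\vek{y}$ contains the remaining literals (so $\alpha_1$ is of the shape handled by Lemma \ref{lem:qh-enc:urc-pos-y}). For each $\neg\litVar{u\vee v}\in\alpha_y^{-}$, unit resolution with the two clauses of group \ref{groupQH6} associated with $u\vee v$ derives the literals $\neg u$ and $\neg v$. Let $\alpha_+\subseteq\lit{\vek{x}_2}$ denote the set of all these derived literals; by construction $\psi\wedge\alpha\vdash_1 l$ for every $l\in\alpha_+$.

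Next I would branch on consistency of $\alpha_1\cup\alpha_+$. If it contains a pair of complementary literals, they must lie in $\lit{\vek{x}_2}$ (since $\alpha$ itself is consistent and $\alpha_+\subseteq\lit{\vek{x}_2}$), so unit resolution derives both a literal and its negation and hence $\bot$, finishing this case. Otherwise, $\alpha':=\alpha_1\cup\alpha_+$ is a partial assignment with no negative occurrence of any $\vek{y}$-variable, so Lemma \ref{lem:qh-enc:urc-pos-y} applies once we verify $\psi\wedge\alpha'\models\bot$. For the latter, the group \ref{groupQH5} clause $\neg\litVar{u\vee v}\vee u\vee v$ implies that, under $\psi$, the pair $\{\neg u,\neg v\}\subseteq\alpha_+$ entails $\neg\litVar{u\vee v}$; thus $\psi\wedge\alpha'$ entails every literal in $\alpha_y^{-}$, hence entails $\psi\wedge\alpha$, which is unsatisfiable by assumption.

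Finally, combining the pieces: from $\psi\wedge\alpha$ unit propagation derives every literal of $\alpha_+$ as described above, so that it has effectively derived $\alpha'$ (using $\alpha_1\subseteq\alpha$ directly). Appending the unit propagation refutation of $\psi\wedge\alpha'$ guaranteed by Lemma \ref{lem:qh-enc:urc-pos-y} yields $\psi\wedge\alpha\vdash_1\bot$. The only subtlety — and the point requiring the case split — is the possibility that eliminating $\alpha_y^{-}$ in favour of $\alpha_+$ creates a contradictory assignment that is not in the scope of Lemma \ref{lem:qh-enc:urc-pos-y}; but in that situation unit propagation finds the contradiction immediately, so nothing is lost.
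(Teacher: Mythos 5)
Your proof is correct and follows essentially the same route as the paper: replace each $\neg\litVar{u\lor v}\in\alpha$ by the literals $\neg u,\neg v$, apply Lemma~\ref{lem:qh-enc:urc-pos-y} to the resulting assignment, and transfer the refutation back to $\psi\land\alpha$ using the clauses of groups~\ref{groupQH5} and~\ref{groupQH6}. Your explicit case split for when the replacement creates complementary literals addresses a point the paper passes over silently (its $\alpha_x\land\alpha_y\land\alpha_{\overline{y}}'$ need not be a partial assignment, as Lemma~\ref{lem:qh-enc:urc-pos-y} formally requires), and your observation that unit propagation then derives $\bot$ immediately is exactly the right patch.
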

\begin{proof}
Assume \(\psi(\vek{x}, \vek{y})\land\alpha\models\bot\).
   Let us split \(\alpha\) into three partial assignments
   \(\alpha_x=\alpha\cap\lit{\vek{x}}\),
   \(\alpha_y=\alpha\cap\vek{y}\), and
   \(\alpha_{\overline{y}}=\alpha\cap\{\neg y\mid y\in\vek{y}\}\).
Moreover, let
   \(\alpha_{\overline{y}}'=\{\neg u, \neg v\mid\neg \litVar{u\lor v}\in\alpha_{\overline{y}}\}\).
   Since $\psi(\vek{x}, \vek{y}) \models u\lor v\Leftrightarrow \litVar{u\lor v}$, we have
   $\psi(\vek{x}, \vek{y}) \models \alpha_{\overline{y}} \Leftrightarrow \alpha_{\overline{y}}'$.
   It follows that
   \(\psi(\vek{x}, \vek{y})\land\alpha_x\land\alpha_{y}\land\alpha_{\overline{y}}'\models\bot\).

   By Lemma~\ref{lem:qh-enc:urc-pos-y} we get
   \(\psi(\vek{x},
      \vek{y})\land\alpha_x\land\alpha_{y}\land\alpha_{\overline{y}}'\vdash_1\bot\).
   Using clauses of groups~\ref{groupQH5} and~\ref{groupQH6}, this is equivalent to
   \(\psi(\vek{x},
      \vek{y})\land\alpha_x\land\alpha_{y}\land\alpha_{\overline{y}}\vdash_1\bot\) and thus to
   \(\psi(\vek{x},
      \vek{y})\land\alpha\vdash_1\bot\) as required.
\end{proof}

We are now ready to prove the main result of this section formulated
in Section~\ref{sub:q-horn:result}.

\begin{proof}[Proof of Theorem~\ref{thm:qh-enc:main}]
The existence of the required encoding follows from
   lemmas~\ref{lem:qh-enc:enc} and~\ref{lem:qh-enc:urc}. The size estimates follow directly from
   the construction described in Table~\ref{tab:qh-enc:clauses}.
\end{proof}

Let us point out that the encoding in Table~\ref{tab:qh-enc:clauses}
is not q-Horn in general. If the groups~\ref{groupQH5} and~\ref{groupQH6}
of clauses are present in the encoding for some literals $u$, $v$, they
contain clauses
$\neg\litVar{u \lor v}\lor v \lor u$,
$\neg u\lor\litVar{u \lor v}$,
$\neg v\lor\litVar{u \lor v}$.
One can easily verify that a valuation $\gamma$ satisfying~\eqref{eq:q-horn}
for these clauses has to satisfy
$\gamma(u) = \gamma(v) = \gamma(\litVar{u \lor v})=0$.
The variable $\litVar{u \lor v}$ is included in the encoding,
if the original q-Horn formula requires $\gamma(u) = \gamma(v) = 1/2$.
In this case, the system of inequalitites~\eqref{eq:q-horn}
for the encoding in Table~\ref{tab:qh-enc:clauses} is inconsistent.


\section{Conclusion and Further Research}
\label{sec:further-research}

We strengthened the known results on the expressive power of PC and
URC formulas and proved structural properties of PC formulas which can
have applications to their use in knowledge compilation. By the
results of Section~\ref{sec:lb}, there is no guarantee for the
existence of a reasonably sized PC formula equivalent to a given CNF, even if it
belongs to a tractable class of Horn or q-Horn formulas. On the other
hand, PC encodings are known~\cite{KS19} to be strictly stronger than
Decision-DNNF and also DNNF which are currently the strongest routinely
used compilation languages.
Not surprisingly, identifying good auxiliary variables remains as an
important hard problem critical for the construction of small CNF encodings.
However, the results of Section~\ref{sec:irred-PC-URC} and
Section~\ref{sec:characterize-PC} suggest that heuristics for searching
good PC representations or encodings with a fixed set of auxiliary variables
may exist.

The methods for compilation not introducing new auxiliary variables are
investigated in a related ongoing research using a program
pccompile~\cite{K19}. Preliminary experiments with this
program~\cite{K19a}
demonstrate that such a compilation is
frequently tractable for formulas of a few hundreds of clauses
that appear as benchmarks in knowledge compilation. Further research is needed
to better clarify the conditions under which this compilation is tractable.
The correspondence between dual rail encoding of a PC formula
and a specific Horn function presented in Section~\ref{sec:characterize-PC}
in this paper is used also in one of the algorithms implemented in~\cite{K19}
to improve efficiency.

Let us close the paper with the following questions left open for further
research. In Theorem~\ref{thm:PC-irredundant}
we have shown that if
\(\varphi_1(\vek{x})\) and \(\varphi_2(\vek{x})\) are two
PC-irredundant formulas representing the same function \(f(\vek{x})\)
on \(n=|\vek{x}|\) variables, then \(|\varphi_2|\leq
   n^2|\varphi_1|\). It is natural to ask if the bound can be
strengthened in the following sense.

\begin{question}
Is it possible to strengthen the bound from Theorem~\ref{thm:PC-irredundant}
to $|\phi_2| = O(n |\phi_1|)$?
\end{question}

By Theorem~\ref{thm:qh-enc:main} there is
a polynomial size URC encoding for an arbitrary q-Horn formula. It is natural to
ask whether we can in fact construct a PC encoding of polynomial size.
This question is open already for the class of Horn formulas contained in
the class of q-Horn formulas and we can thus pose the following question.

\begin{question}
Let $\phi(\vek{x})$ be a Horn formula or, more generally, a URC formula.
Is there a PC encoding $\psi(\vek{x}, \vek{y})$ of $\phi$ of size polynomial
in the size of $\phi$?
\end{question}

Using the notation from Theorem~\ref{thm:qh-enc:main}, the size of the
URC encoding constructed for a q-Horn formula $\phi(\vek{x}_1, \vek{x}_2)$
is $O(|\phi| + |\vek{x}_2|^3)$.

\begin{question} \label{question:q-horn:URC-size}
Is there a URC encoding for a q-Horn formula $\phi$ of size
$O(|\phi| + |\vek{x}_2|^c)$, where $c < 3$?
\end{question}

\section*{Acknowledgements}

Both authors gratefully acknowledge the
support by Grant Agency of the Czech Republic (grant No.~GA19--19463S).

\bibliography{ms}
\bibliographystyle{hplain}

\end{document}